\documentclass[review]{elsarticle}

\usepackage{hyperref}

\journal{Nonlinear Analysis: Hybrid Systems}

\usepackage{amsmath}
\usepackage{latexsym, amssymb}
\usepackage{graphicx}
\usepackage{amsmath, amsbsy}
\usepackage{amsopn, amstext}
\usepackage{hyperref}
\usepackage{cancel, color}
\usepackage{epstopdf}
\usepackage{algorithm}
\usepackage{array}
\usepackage{ragged2e}
\usepackage{multirow}
\usepackage{algorithmic}
\usepackage{mathrsfs}
\usepackage{hyperref}
\usepackage{bm}
\usepackage{algorithm}
\usepackage{ifpdf,hyperref}
\usepackage[textwidth=16cm,textheight=21.5cm]{geometry}
\usepackage{tikz} 
\usetikzlibrary{arrows}
\usetikzlibrary{calc,through}

\def\cal{\mathcal}

\def\diag{diag}

\def\0{{\bf 0}}

\newcommand{\R}{{\mathbb R}}

\newtheorem{thm}{Theorem}[section]
\newtheorem{cor}[thm]{Corollary}
\newtheorem{dfn}[thm]{Definition}

\newtheorem{exa}[thm]{Example}

\newtheorem{rem}[thm]{Remark}
\newtheorem{lem}[thm]{Lemma}



\begin{document}

\begin{frontmatter}

\title{Optimal Control of Switched Systems Governed by Logical Switching Dynamics\tnoteref{mytitlenote}}
\tnotetext[mytitlenote]{This work is supported partly by the National Natural Science Foundation of China under Grant 62103305 and 62573261, the Hong Kong Polytechnic University under Grant (1-WZ0E, 4-ZZPT), the Taishan Scholar Project of Shandong Province under Grant tsqn202312033, and the Xiaomi Young Talents Program.}

%% Group authors per affiliation:
%\author{ Yanan Pan\fnref{myfootnote}}
%\address{ College of Electrical Engineering and Automation, Shandong University of Science and Technology, Qingdao, 266590, Shandong, People's Republic of China}
%\fntext[myfootnote]{Since 1880.}
\author[mymainaddress]{Xiao Zhang}\ead{xiaozhang@amss.ac.cn}
\author[tongji1,tongji2]{Min Meng}\ead{mengmin@tongji.edu.cn}
\author[sdu]{Changxi Li}\ead{lichangxi@sdu.edu.cn}
\author[mymainaddress]{Ka-Fai Cedric Yiu\corref{mycorrespondingauthor}}\ead{macyiu@polyu.edu.hk}

\cortext[mycorrespondingauthor]{Corresponding author}

\address[mymainaddress]{Department of Applied Mathematics, Hong Kong Polytechnic University, Hong Kong, P.R. China}
\address[tongji1]{Department of Control Science and Engineering, College of Electronics and Information Engineering, Shanghai Research Institute for Intelligent Autonomous Systems, Tongji University, Shanghai, P.R. China}
\address[tongji2]{National Key Laboratory of Autonomous Intelligent Unmanned Systems, and Frontiers Science Center for Intelligent Autonomous Systems, Ministry of Education, Shanghai, P.R. China}
\address[sdu]{chool of Mathematics, Shandong University, Jinan, P. R. China}
\begin{abstract}
This paper investigates the optimal co-design of logical and continuous controls for switched linear systems governed by controlled logical switching dynamics. Unlike traditional switched systems with arbitrary or state-dependent switching, the switching signals here are generated by an internal logical dynamical system and explicitly integrated into the control synthesis. By leveraging the semi-tensor product (STP) of matrices, we embed the coupled logical and continuous dynamics into a unified algebraic state-space representation, transforming the co-design problem into a tractable linear-quadratic framework. We derive Riccati-type backward recursions for both deterministic and stochastic logical dynamics, which yield optimal state-feedback laws for continuous control alongside value-function-based, state-dependent decision rules for logical switching. To mitigate the combinatorial explosion inherent in logical decision-making, a hierarchical algorithm is developed to decouple offline precomputation from efficient online execution. Numerical simulations demonstrate the efficacy of the proposed framework.
\end{abstract}

\begin{keyword}
Switched systems, optimal control, logical switching dynamics, co-design, Riccati-type solution.
\end{keyword}

\end{frontmatter}

%\linenumbers
\section{Introduction}
Modern hybrid systems in cyber-physical applications, such as autonomous vehicles and smart grids, rely on the seamless integration of discrete supervisory logic and continuous regulation. Critical operational decisions, ranging from lane-change maneuvers in traffic flow \cite{lyg98,hav14,mos23} to mode-switching strategies in power distribution \cite{dou16,wan19,nog20}, must be executed in real time by accounting for the evolving physical states. Crucially, these logical decisions are often governed by underlying discrete-event structures, where the switching signal evolves according to an internal \emph{controlled logical dynamical system} rather than being arbitrarily assigned \cite{bem99,dav00,guo22}. Consequently, achieving superior system performance necessitates a unified co-design approach that optimizes logical steering inputs and continuous regulation signals simultaneously, rather than treating switching behaviors as exogenous or pre-determined.

The intrinsic difficulty of this co-design problem lies in the structural coupling between logical evolution and continuous regulation. Unlike classical switched optimal control, where the switching index is treated as a direct decision variable at each stage, here the switching behavior is generated by an internal logical dynamical system whose state evolves according to controlled transition rules. As a result, the feasible switching sequence is implicitly constrained by the trajectory of the logical state.

From an optimization perspective, this coupling prevents the Bellman recursion from being formulated solely over the original continuous state space. The logical state must be incorporated as part of the system state, thereby enlarging the optimization domain and introducing combinatorial growth in admissible logical actions. Consequently, classical optimal control techniques cannot be directly applied without explicitly embedding logical dynamics into the state-space representation. Without such a unified framework, obtaining globally optimal solutions remains both analytically and computationally challenging.

Early influential works on optimal switching treat the switching signal as a direct or relaxed decision variable. Representative examples include embedding methods that optimize mode sequences and switching times via continuous relaxations \cite{wu19}, and the discrete-time switched LQR (DSLQR) framework, which freely selects the switching index at each time step under a quadratic cost \cite{zha09}. A dominant paradigm in optimal switching, exemplified by receding-horizon control with dynamic programming \cite{gor11}, is to jointly optimize the continuous input and a sequence of switching decisions over a finite horizon, thereby treating switching as a direct—albeit combinatorially complex—optimization variable.

While effective in minimizing performance indices, these approaches share a fundamental limitation: the switching index is assumed to be directly selectable at each decision epoch. In many practical hybrid systems, however, switching behavior is generated by an internal logical layer governed by causal transition rules rather than instantaneous optimization decisions. Mode transitions may depend on prior actions, protocol constraints, safety interlocks, or dwell-time requirements, and therefore cannot be reassigned arbitrarily.

Without explicitly incorporating this logical evolution into the system state, such intrinsic constraints cannot be systematically embedded into the optimality conditions and must instead be imposed externally or approximated heuristically. In contrast to classical switched LQR formulations, where the switching index is optimized stage by stage, here the admissible switching sequence is implicitly restricted by the evolution of a controlled logical dynamical system.

Recent studies have begun to explicitly model logical dynamics within hybrid systems. For instance, \cite{gon22} embeds logical and continuous states into a unified algebraic representation, but logical objectives are typically treated as separate reachability or constraint problems, decoupled from the overall performance index. Model Predictive Control (MPC) approaches formulate hybrid co-design as online mixed-integer programs \cite{kha25}, shifting combinatorial complexity to the real-time solver and limiting scalability. In parallel, switched systems and Markov jump linear systems (MJLS) \cite{fra01,cos05,lee08,hou10,zha26} model switching as autonomous or stochastic processes; however, since the switching evolution is exogenous, only continuous control admits feedback optimization. 

Motivated by the discrepancy between unconstrained switching assumptions and practical system constraints, as well as the computational limitations of existing logical modeling techniques, this paper develops a unified optimal control framework for discrete-time switched linear systems governed by controlled logical switching dynamics. The overall system architecture, illustrating the interaction between logical and continuous layers, is depicted in Fig. \ref{fig:system_structure}. Leveraging the semi-tensor product (STP) of matrices, we establish an algebraic state-space representation (ASSR) that embeds discrete logic and continuous dynamics into a single augmented system. This transformation enables a unified linear–quadratic (LQ) formulation, bridging the gap between discrete decision-making and continuous regulation. We derive Riccati-type backward recursions that not only characterize optimal continuous control laws but also induce a value-function-driven, state-dependent criterion for logical transitions. Specifically, at each epoch, the optimal logical control is determined by minimizing a quadratic form of the augmented state, where the weighting matrices are recursively updated, providing a principled mechanism for real-time logical-continuous coordination.

\tikzset{
    block/.style = {draw, fill=white, rectangle, minimum height=3em, minimum width=3em},
    tmp/.style  = {coordinate}, 
    sum/.style= {draw, fill=white, circle, node distance=1cm},
    mult/.style= {draw, fill=white, circle, node distance=1cm, minimum size=0.6cm}, % 新增乘法符号样式
    input/.style = {coordinate},
    output/.style = {coordinate},
    pinstyle/.style = {pin edge={to-,thin,black}}
}

\begin{figure}[h!]
\centering
\begin{tikzpicture}[auto, node distance=1.3cm]
    % ==================== 逻辑系统部分 ====================
    \node [input, name=linput] (linput) {};
    \node [input, right of=linput, node distance=0.5cm] (sum) {};
    \node [block, right of=sum] (controller) {$f$};
    \node [input, right of=controller] (lstate) {};
    \node [block, right of=lstate] (holder) {$z^{-1}$};;
    \node [input, right of=holder] (loutput) {};
    \node [input, above of=loutput] (sum1) {};

    % 逻辑系统连接
    \draw [->] (linput) -- node{$\gamma_t$} (controller);
    \draw [->] (controller) --  (holder);
    \draw [-] (holder) -- node{$\theta_t$} (loutput);

    \draw [-] (loutput) -- (sum1);
    \draw [->] (sum1) -| (controller);

    % ==================== 连续系统部分 ====================
    \node [output, below of=lstate] (sum2) {};
    \node [output, below of=sum2, node distance=0.1cm] (subsys1) {};
    \node [output, below of=subsys1] (subsys2) {};
    \node [output, below of=subsys2] (subsysq) {};
    
    % 子系统模块
    \node [block, left of=subsys1] (mode1) {$(A_1,B_1)$};
    \node [block, left of=subsys2] (mode2) {$(A_2,B_2)$};
    \node [block, left of=subsysq] (modeq) {$(A_N,B_N)$};
    
    % 切换和状态更新
    \node [input, right of=subsys1, node distance=0.1cm] (cross2) {};
    \node [input, right of=subsys2, node distance=0.8cm] (switch) {};
    \node [block, right of=switch] (holder1) {$z^{-1}$};
    \node [block, below of=holder] (iota) {$\iota$};
    \node [input, left of=iota, node distance=1cm] (switch2) {};
    \node [input, below of=switch2, node distance=0.5cm] (switch3) {};
    \node [input, right of=holder1] (output) {};
    \node [input, below of=output, node distance=2.1cm] (cross) {};
    \node [input, left of=cross, node distance=6.2cm] (cross1) {};
    
    % 切换控制输入点
    \node [input, above of=switch, node distance=1cm] (switch1) {};

    % ==================== 连续系统连接 ====================
    \draw [->] (switch) --  (holder1);
    \draw [-] (mode1) --  (subsys1);
    \draw [-] (mode2) --  (subsys2);
    \draw [-] (modeq) --  (subsysq);
    \draw [-] (switch) --  (cross2);
    
    % 控制输入
    \node [input, left of=mode1] (sum3) {};
    \node [input, left of=sum3, node distance=0.5cm] (input) {};
    \draw [->] (input) --  node{$u_t$}(mode1);
    \draw [->] (sum3) |-  (mode2);
    \draw [->] (sum3) |-  (modeq);
    
    % 状态输出和反馈
    \draw [-] (holder1) -- node{$x_t$} (output);
    \draw [-] (output) -- (cross);
    \draw [-] (cross) --  (cross1);
    
    % 反馈分支
    \node [input, above of=cross1, node distance=0.5cm] (subq) {};
    \node [input, right of=subq, node distance=0.68cm] (subqq) {};
    \node [input, above of=subq, node distance=1.3cm] (sub2) {};
    \node [input, right of=sub2, node distance=0.77cm] (sub22) {};
    \node [input, above of=sub2, node distance=1.3cm] (sub1) {};
    \node [input, right of=sub1, node distance=0.77cm] (sub11) {};
    
    \draw [-] (cross1) --  (subq); 
    \draw [->] (subq) --  (subqq); 
    \draw [-] (cross1) --  (sub2); 
    \draw [->] (sub2) --  (sub22); 
    \draw [-] (cross1) --  (sub1); 
    \draw [->] (sub1) --  (sub11);
    
    % ==================== 系统间连接 ====================

    \draw [->] (loutput) |- (iota);
    \draw [-] (iota) -- (switch2);
    \draw [->] (switch2) -- (switch3);
    
    % ==================== 连接点（可选，根据需要保留） ====================
    \draw (0,0) circle (1pt);
    \draw (0,-1.4) circle (1pt);
    \draw (3.1,-1.4) circle (1pt);
    \draw (3.1,-2.7) circle (1pt);
    \draw (3.1,-4) circle (1pt);
    
    \draw (3.1,-3.2) circle (0.3pt);
    \draw (3.1,-3.3) circle (0.3pt);
    \draw (3.1,-3.4) circle (0.3pt);
    
\end{tikzpicture}
\caption{{\bf Overall architecture of the proposed co-design framework.} This diagram illustrates the coupling captured algebraically by \eqref{eq5.1}--\eqref{eq5.2} (presented in Section \ref{S2}). 
         The logical control input $\gamma_t$ drives the logical dynamical system $f$, 
         generating the switching signal $\theta_t$ that selects the active continuous 
         subsystem $(A_{\iota(\theta_t)}, B_{\iota(\theta_t)})$. The continuous control input $u_t$ 
         regulates the selected subsystem.\label{fig:system_structure}}
\end{figure}
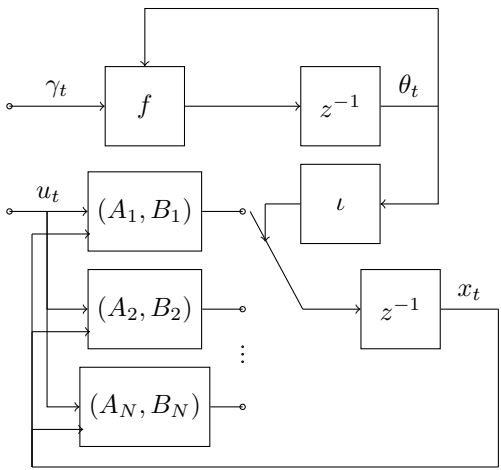

The proposed framework is versatile, accommodating both deterministic and stochastic logical dynamics. We show that classical deterministic free-switching control and the optimal control of Markov Jump Linear Systems (MJLS) are recovered as special cases when logical dynamics are exogenous. When logical control is active, our framework facilitates feedback-driven co-optimization, enabling the logical layer to proactively steer the system toward improved performance.

Main Contributions:
\begin{itemize}
\item {\bf Unified Co-Design via Algebraic Embedding:} We resolve the structural mismatch in hybrid optimization by employing the STP to map logical dynamics into an ASSR. This transformation reformulates the co-design problem as a piecewise quadratic optimization over a unified state space, enabling the derivation of necessary and sufficient optimality conditions in closed form.
\item {\bf Riccati-Based Synthesis with Value-Driven Logic:} For the unified ASSR, we derive backward recursions that yield optimal state-feedback laws for continuous control and state-dependent selection rules for logical control. The resulting cost-to-go matrices serve as quantitative metrics for evaluating discrete actions, effectively coupling forward-looking logical planning with backward dynamic programming.
\item {\bf Hierarchical Algorithm for Real-Time Tractability:} To mitigate the combinatorial explosion of logical decision-making, we propose a hierarchical algorithm that decouples intensive evaluations into an offline precomputation phase. Online execution is thus reduced to an efficient selection from precomputed gains, ensuring computational feasibility for complex system scales.
\end{itemize}

The remainder of this paper is organized as follows. Section~\ref{S2} provides the preliminaries and problem formulation. Section~\ref{S3} presents the main theoretical results and algorithmic development. Section~\ref{S4} validates the framework via numerical experiments, and Section~\ref{S5} concludes the paper.
\section{Problem Formulation\label{S2}}
\emph{Notations}:\begin{itemize}
    \item $\R^n$: the Euclidean space of all real $n$-vectors.
    \item $\mathscr{D}_k:=\{1,2,\cdots,k\}$: the $k$-value logic set.
    \item ${\bf 1}_{n}$: an $n$-dimensional vector with identical entries and $I_{n}$ is an $n$-dimensional identity matrix.
    \item $\delta_n^i$ denotes the $i$-th canonical basis vector of $\mathbb{R}^n$.
    \item $\Delta_n := \{ \delta_n^1, \delta_n^2, \dots, \delta_n^n \}$.
    \item A matrix $L = [\delta_n^{i_1} ~ \delta_n^{i_2} ~ \cdots ~ \delta_n^{i_r}] \in \mathbb{R}^{n \times r}$ is called a logical matrix.
    \item $\mathcal{L}_{s\times r}$: the set of $s\times r$-dimensional logical matrices.
    \item $A\otimes B$ is the Kronecker product of two arbitrary matrices.
    \item $A*B$: the Khatri-Rao product of $A \in{\cal M}_{p\times n}$ and $B\in{\cal M}_{q\times n}$.
    \item $\vec{\theta}_t$, $\vec{\gamma}_t$, $\vec{\sigma}_t$: vector forms of the logical variables $\theta_t$, $\gamma_t$, and $\sigma_t$, respectively.
    \item $\mathbf{u} = \{u_0, \dots, u_{T-1}\}$: sequence of continuous controls.
\item $\bm{\gamma} = \{{\gamma}_0, \dots, {\gamma}_{T-1}\}$: sequence of logical controls.
    \item $\vec{\bm{\gamma}} = \{\vec{\gamma}_0, \dots, \vec{\gamma}_{T-1}\}$: a logical control sequence in vector form.
    \item $\vec{\bm{\gamma}}_{t:T-1} = \{\vec{\gamma}_t, \dots, \vec{\gamma}_{T-1}\}$: subsequence from time $t$ to $T-1$.
\end{itemize}
\subsection{Mathematical Preliminaries}
To enable a unified state-space representation of logical and continuous dynamics, we briefly recall the STP, which provides an algebraic embedding of finite-valued logic into linear spaces.

\begin{dfn}[\cite{che12}] \label{da.1}
Consider two matrices $A\in \R^{m\times n}, B\in \R^{p\times q}$. Let $l$ be the least common multiple of $n$ and $p$. The STP of $A$ and $B$ is defined by
\begin{align*}
A\ltimes B:= \left(A\otimes I_{l/n}\right)\left(B\otimes I_{l/p}\right)\in \R^{ml/n\times ql/p}.
\end{align*}
\end{dfn}

The STP preserves most algebraic properties of the conventional matrix product. In particular, when $n=p$, $A\ltimes B=AB$. Hence, we omit the symbol ``$\ltimes$" when no confusion arises.

The following lemma summarizes those most relevant to our work.

\begin{lem}[\cite{che12}]\label{lem2.0}
\begin{enumerate}
\item \emph{(Swap Rule)} The STP is pseudo-commutative between vectors and matrices. Specifically, if $z\in \mathbb{R}^r$ and $A\in \mathbb{R}^{m\times n}$, then $z \ltimes A = (I_r \otimes A) \ltimes z$.
\item \emph{(Power-Reducing Matrix)} Define $\Phi_k := \diag\left[\delta_k^1, \delta_k^2, \dots, \delta_k^k\right] \in \mathcal{L}_{k^2 \times k}$. For any $x \in \Delta_k$, it holds that $x \ltimes x = \Phi_k x \in \Delta_{k^2}$. The matrix $\Phi_k$ is called the $k$-th power-reducing matrix.
\end{enumerate}
\end{lem}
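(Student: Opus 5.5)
We prove the two items of Lemma \ref{lem2.0} directly from Definition \ref{da.1}, using only the mixed-product property of the Kronecker product, $(A\otimes B)(C\otimes D)=AC\otimes BD$, together with the identity $z=z\otimes 1$ for a column vector $z$.

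\emph{Swap rule.} Let $z\in\R^r$, viewed as an $r\times 1$ matrix, and $A\in\R^{m\times n}$.
\begin{itemize}
\item The left side $z\ltimes A$ has column dimension $1$ meeting row dimension $m$. Since $\mathrm{lcm}(1,m)=m$, Definition \ref{da.1} gives $z\ltimes A=(z\otimes I_m)A$.
\item On the right, $(I_r\otimes A)$ has $rn$ columns and $z$ has $r$ rows. Since $r\mid rn$, the lcm is $rn$, and $(I_r\otimes A)\ltimes z=(I_r\otimes A)(z\otimes I_n)$.
\item By the mixed-product property, $(I_r\otimes A)(z\otimes I_n)=(I_r z)\otimes(AI_n)=z\otimes A$.
\item Likewise, $(z\otimes I_m)(1\otimes A)=z\otimes A$.
\end{itemize}
Both sides therefore equal $z\otimes A$, which proves the swap rule. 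The only care needed is checking that the lcm-based identity padding matches the dimensions in each case.

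\emph{Power-reducing matrix.} Let $x=\delta_k^i$. Then $x\ltimes x=x\otimes x=\delta_k^i\otimes\delta_k^i=\delta_{k^2}^{(i-1)k+i}$. Next, $\Phi_k x$ is the $i$-th column of $\Phi_k=\diag[\delta_k^1,\dots,\delta_k^k]$, the block-diagonal matrix with $k\times 1$ blocks. That column is zero except in the $i$-th block of length $k$, where it equals $\delta_k^i$. Its single nonzero entry therefore sits at position $(i-1)k+i$, so $\Phi_k x=\delta_{k^2}^{(i-1)k+i}\in\Delta_{k^2}$. Every column of $\Phi_k$ has this form, so $\Phi_k\in\mathcal{L}_{k^2\times k}$.

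The main obstacle is bookkeeping rather than anything conceptual: one has to read the $\diag$ notation as block-diagonal stacking of column vectors, and handle the STP dimension conventions correctly when one factor is a vector. Both results are also standard and appear in \cite{che12}.
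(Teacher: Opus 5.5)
Your proof is correct. Both sides of the swap rule reduce, via the mixed-product property, to $z\otimes A$. Likewise $\delta_k^i\ltimes\delta_k^i=\delta_k^i\otimes\delta_k^i=\delta_{k^2}^{(i-1)k+i}$ is exactly the $i$-th column of $\Phi_k$. The step $x\ltimes x=x\otimes x$ that you state without comment is just your first computation $z\ltimes A=(z\otimes I_m)A=z\otimes A$ with $A=x$. The paper gives no proof of this lemma and only cites \cite{che12}, where the argument is this same direct verification from Definition~\ref{da.1}.
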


\subsubsection*{Algebraic Representation of Logical Variables}
To represent logical dynamics algebraically, we identify each logical value $i\in\mathscr{D}_{k}$ with the canonical basis vector $\delta_{k}^i\in\Delta_{k}$. This mapping establishes a one-to-one correspondence between logical values in $\mathscr{D}_k$ and the canonical basis vectors in $\mathbb{R}^k$. For example, for $k=2$ (Boolean case), we have:
\begin{align*}
1 \leftrightarrow \delta_2^1 = \begin{bmatrix}1\\0\end{bmatrix}, \quad
2 \leftrightarrow \delta_2^2 = \begin{bmatrix}0\\1\end{bmatrix}.
\end{align*}

\subsubsection*{Logical Dynamical Systems}
A logical dynamical system is a finite-state machine whose evolution is governed by logical rules. It consists of:
\begin{itemize}
    \item \textbf{State nodes} $\theta^i_t$: discrete variables representing the internal logic state (e.g., mode indicators, switch positions, Boolean flags).
    \item \textbf{Control nodes} $\gamma^j_t$: discrete inputs that can be chosen by the controller to influence the system's evolution.
\end{itemize}
Each node takes values from a finite set $\mathscr{D}_{k_i}$. The system evolves according to logical update functions:
\begin{align}
    \label{eq2.1}
\theta^i_{t+1}=f^i(\theta^1_t,\cdots,\theta^p_t,\gamma^1_t,\cdots,\gamma^q_t),\quad i=1,2,\cdots,p,
\end{align}
where $f^i:\mathscr{D}_{k_1} \times \cdots \times \mathscr{D}_{k_p} \times \mathscr{D}_{h_1} \times \cdots \times \mathscr{D}_{h_q} \to \mathscr{D}_{k_i}$ is the update function of the state note $\theta^i$.

Collectively, \eqref{eq2.1} defines a finite-state controlled dynamical system, where the global logical state evolves causally under the influence of logical control inputs.

\subsubsection*{Global State and Control}
The \textbf{global logical state} is the tuple of all nodal states:
\[
\theta_t:= (\theta_t^1, \theta_t^2, \dots, \theta_t^p) \in \mathscr{D}_{k_1} \times \cdots \times \mathscr{D}_{k_p}.
\]
Similarly, the \textbf{global logical control} is:
\[
\gamma_t:= (\gamma_t^1, \gamma_t^2, \dots, \gamma_t^q) \in \mathscr{D}_{h_1} \times \cdots \times \mathscr{D}_{h_q}.
\]

Using the vector representation, we define:
\begin{align*}
\vec{\theta}_t &:= \ltimes_{i=1}^p \vec{\theta}^i_t \in \Delta_N, \quad N:=\prod_{i=1}^p k_i, \\
\vec{\gamma}_t &:= \ltimes_{j=1}^q \vec{\gamma}^j_t \in \Delta_M, \quad M:=\prod_{j=1}^q h_j.
\end{align*}
Here, $\ltimes_{i=1}^p$ denotes the sequential STP of $p$ vectors. The vector $\vec{\theta}_t$ (resp. $\vec{\gamma}_t$) encodes the joint configuration of all state (resp. control) nodes at time $t$.

\noindent\textbf{Mode index function:} Since $\vec{\theta}_t$ is a canonical basis vector, there exists a unique integer $\iota(\theta_t) \in \{1, 2, \dots, N\}$ such that $\vec{\theta}_t = \delta_N^{\iota(\theta_t)}$. We call $\iota(\theta_t)$ the \textbf{active mode index} (i.e., the index of the active continuous subsystem) at time $t$. This function $\iota(\cdot)$ provides the crucial link to the switched continuous subsystem.

\subsubsection*{Structure Matrix and ASSR}
The key result in the STP framework is that any logical system of the form (\ref{eq2.1}) can be represented as a linear-like system over the vector space of logical states:

\begin{lem}[\cite{che12}]\label{lem2.3}

\begin{itemize}
\item[(i)] Let $f^i:\mathscr{D}_{k_1} \times \cdots \times \mathscr{D}_{k_p} \times \mathscr{D}_{h_1} \times \cdots \times \mathscr{D}_{h_q} \to \mathscr{D}_{k_i}$ be the updating function of $\theta^i$, i.e., $\theta^i_{t+1}=f^i(\theta^1_t,\cdots,\theta^p_t,\gamma^1_t,\cdots,\gamma^q_t)$. Then there exists a unique logical matrix $M_{f^i}\in {\cal L}_{N\times MN}$, called the structure matrix of $f^i$, such that in vector form the logical function can be expressed by
\begin{align}\label{2.1.3}
\vec{\theta}_{t+1}^i=M_{f^i}\ltimes\vec{\gamma}_t\ltimes\vec{\theta}_t.
\end{align}
\item[(ii)] Let $M_{f^i}$ be the structure matrix of the logical function $f^i$, $i=1,2,\cdots,p$. Then there exists a unique logical matrix $L\in {\cal L}_{N\times MN}$ such that in vector form, logical control network (\ref{eq2.1}) can be expressed by
\begin{align}\label{2.1.4.1}
\vec{\theta}_{t+1}=L\ltimes\vec{\gamma}_t\ltimes\vec{\theta}_t,
\end{align}
where $L=M_{f^1}*M_{f^2}*\cdots*M_{f^p}$ is called the structure matrix of the logical control network (\ref{eq2.1}).
\end{itemize}
Equation (\ref{2.1.4.1}) is called the ASSR of the system (\ref{eq2.1}).
\end{lem}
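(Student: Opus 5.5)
The plan is to build both structure matrices column by column, using the fact that the STP of canonical basis vectors is again a canonical basis vector. Uniqueness will then follow because every canonical basis vector of $\R^{MN}$ is realized by some admissible input. Throughout I read the dimension of $M_{f^i}$ as $k_i\times MN$; the stated $N\times MN$ is only correct for the global matrix $L$.

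\textbf{Step 1 (bijection of inputs).} First I would show that $(\gamma_t,\theta_t)\mapsto \vec{\gamma}_t\ltimes\vec{\theta}_t$ is a bijection from $\mathscr{D}_{h_1}\times\cdots\times\mathscr{D}_{h_q}\times\mathscr{D}_{k_1}\times\cdots\times\mathscr{D}_{k_p}$ onto $\Delta_{MN}$.
\begin{itemize}
\item For column vectors, Definition~\ref{da.1} reduces the STP to the Kronecker product.
\item Hence $\delta_a^i\ltimes\delta_b^j=\delta_a^i\otimes\delta_b^j=\delta_{ab}^{(i-1)b+j}$.
\item Iterating this over all nodes gives a mixed-radix (lexicographic) index. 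Such an index is a bijection between tuples and $\{1,\dots,MN\}$.
\end{itemize}

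\textbf{Step 2 (part (i)).} For $j\in\{1,\dots,MN\}$, let $(\gamma,\theta)$ be the unique tuple with $\vec\gamma\ltimes\vec\theta=\delta_{MN}^j$. I define the $j$-th column of $M_{f^i}$ to be the vector form $\delta_{k_i}^{f^i(\theta,\gamma)}$. This makes $M_{f^i}$ a logical matrix by construction. Since $M_{f^i}\delta_{MN}^j$ is just the $j$-th column, \eqref{2.1.3} holds for every state and control. The order of arguments in $f^i$ ($\theta$ first, then $\gamma$) is irrelevant, because the columns are defined directly from the tuple.

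For uniqueness, suppose two matrices $M,M'$ satisfy \eqref{2.1.3}. Evaluating both on every $\delta_{MN}^j$, which is possible by surjectivity in Step 1, shows that all their columns coincide.

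\textbf{Step 3 (part (ii)).} Write $w=\vec\gamma_t\ltimes\vec\theta_t=\delta_{MN}^j$. Then
\[
\vec\theta_{t+1}=\ltimes_{i=1}^p\vec\theta^i_{t+1}=\bigotimes_{i=1}^p (M_{f^i}w).
\]
The key identity is
\[
(Aw)\otimes(Bw)=(A*B)w \quad \text{for } w\in\Delta_{MN},
\]
which holds because both sides equal $\mathrm{col}_j(A)\otimes\mathrm{col}_j(B)$; this is exactly the definition of the Khatri--Rao product. Applying this identity inductively over $i$ gives $\vec\theta_{t+1}=(M_{f^1}*\cdots*M_{f^p})w$.

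The resulting $L$ is logical, since a Kronecker product of canonical vectors is canonical, of dimension $\prod k_i=N$. Uniqueness follows by the same column-evaluation argument as in Step 2.

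\textbf{Main obstacle.} The only delicate point is the index bookkeeping in Step 1. One must check that the STP ordering of the $\gamma$-nodes followed by the $\theta$-nodes produces a consistent lexicographic enumeration, so that the column assignment is well defined and exhaustive. Once this is settled, the rest is direct verification.
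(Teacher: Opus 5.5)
Your proof is correct, and you are right that $M_{f^i}$ must lie in $\mathcal{L}_{k_i\times MN}$, not $\mathcal{L}_{N\times MN}$; the $2\times 8$ matrices $M_1,M_2$ in Example~\ref{ex:simple_logical} confirm your reading.

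The paper gives no proof of this lemma; it only cites the STP literature, so there is no in-paper argument to match. In that literature, part (i) is obtained either by reading the columns off the truth table, as you do, or algebraically. The algebraic route composes the structure matrices of elementary connectives with dummy, swap and power-reducing matrices; this is the procedure sketched in Example~\ref{ex:simple_logical}, and it is why Lemma~\ref{lem2.0} is recorded. For part (ii), the algebraic route sets $w=\vec\gamma_t\ltimes\vec\theta_t$ and writes
\[
M_{f^1}w\ltimes M_{f^2}w=M_{f^1}\ltimes(I_{MN}\otimes M_{f^2})\ltimes w\ltimes w=(M_{f^1}\otimes I_{k_2})(I_{MN}\otimes M_{f^2})\Phi_{MN}\,w ,
\]
then iterates over the remaining nodes. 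Evaluating at $w=\delta_{MN}^j$ shows that this product matrix is exactly $M_{f^1}*M_{f^2}$, which is your identity $(Aw)\otimes(Bw)=(A*B)w$ for $w\in\Delta_{MN}$.

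Your column-by-column argument is the more elementary route. It needs only two facts: the STP of column vectors is the Kronecker product, and $\vec\gamma_t\ltimes\vec\theta_t$ ranges bijectively over $\Delta_{MN}$. Any proof needs the second fact for uniqueness anyway. It avoids the swap rule and $\Phi_k$ entirely. The algebraic route instead lets you compute $M_{f^i}$ and $L$ directly from logical formulas without tabulating all $MN$ input combinations. That is a computational convenience for large networks, not something existence or uniqueness requires.
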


\noindent \textbf{From ASSR to Unified Co-Design:}
The ASSR (\ref{2.1.4.1}) is the cornerstone of our approach. It transforms the \emph{nonlinear} logical dynamics into an algebraic form that is linear in the augmented state for each admissible logical control, thereby enabling a DP formulation in an augmented state space, from which Riccati-type recursions naturally emerge.

\begin{exa}\label{ex:simple_logical}
Consider a Boolean network with two state nodes $\theta^1, \theta^2 \in \mathscr{D}_2$ and one control node $\gamma \in \mathscr{D}_2$:
\begin{align*}
\begin{cases}
\theta^1_{t+1} &= \theta^1_t \wedge \gamma_t, \\
\theta^2_{t+1} &= \theta^2_t \vee \gamma_t,
\end{cases}
\end{align*}
where $\wedge$ and $\vee$ denote logical AND and OR, respectively.

For elementary logic gates (see \cite{che12} for more details):
\[
M_\wedge = \begin{bmatrix}1&0&0&0\\0&1&1&1\end{bmatrix}, \quad
M_\vee = \begin{bmatrix}1&1&1&0\\0&0&0&1\end{bmatrix}.
\]

Using Lemma \ref{lem2.3}, it follows that
\begin{align*}
    \begin{cases}
        \vec{\theta}_{t+1}^1=M_\wedge\left(I_{4}\otimes {\bf 1}_2^\mathrm{T}\right) \vec{\gamma}_t \vec{\theta}_t^1\vec{\theta}_t^2:=M_1\vec{\gamma}_t\vec{\theta}_t,\\
        \vec{\theta}_{t+1}^2=M_\vee\left(I_{2}\otimes {\bf 1}_2^\mathrm{T}\otimes I_{2}\right) \vec{\gamma}_t \vec{\theta}_t^1\vec{\theta}_t^2:=M_2\vec{\gamma}_t\vec{\theta}_t,
    \end{cases}
\end{align*}
where $(I_{4}\otimes {\bf 1}_2^\mathrm{T})$ and $(I_{2}\otimes {\bf 1}_2^\mathrm{T}\otimes I_{2})$ are dummy matrices that remove irrelevant logical variables to match the arity of the logic gate.

Then we have the structure matrix
\[
L = M_1 * M_2 = [\delta_4^1, \delta_4^1, \delta_4^3, \delta_4^3, \delta_4^3, \delta_4^4, \delta_4^3, \delta_4^4 ],
\]
where each column of $L$ corresponds to one of the $2^3=8$ input combinations $(\gamma_t, \theta_t^1, \theta_t^2)$, directly yielding $\vec{\theta}_{t+1} = L \ltimes \vec{\gamma}_t \ltimes \vec{\theta}_t$.

This example illustrates how logical dynamics with control inputs can be systematically embedded into a linear-algebraic form, avoiding ad hoc mode enumeration.
\end{exa}

\begin{rem}
For this simple example, $L$ could also be obtained by enumerating the truth table. The power of the STP framework lies in its \textit{systematic algebraic procedure}, which scales gracefully to large networks and facilitates the analytical control design undertaken in the sequel. \hfill $\Box$
\end{rem}

\subsection{System Model and Control Objective}
We now integrate the logical dynamics with a switched continuous system, resulting in a hybrid system whose switching signal is generated by a controllable logical dynamical system. The core challenge is to jointly design the logical control $\gamma_t$ (which steers the discrete mode transitions) and the continuous control $u_t$ (which regulates the active continuous subsystem) to minimize a quadratic performance index.

Two cases are addressed: deterministic and stochastic logical dynamics.

\subsubsection{Deterministic Case} The system is described by
\begin{align}\label{eq5.1}
\begin{array}{l}
\begin{cases}
\theta_{t+1}=f\big(\theta_t,\gamma_t\big),\\
x_{t+1}=A_{\iota(\theta_t)}x_t+B_{\iota(\theta_t)} u_t,\\
\end{cases}
\end{array}
\end{align}
where $\theta_t \in \mathscr{D}_{k_1} \times \cdots \times \mathscr{D}_{k_p}$ is the global logical state, $\gamma_t\in\mathscr{D}_{h_1} \times \cdots \times \mathscr{D}_{h_q}$ is the global logical control, and $f:\mathscr{D}_{k_1} \times \cdots \times \mathscr{D}_{k_p} \times \mathscr{D}_{h_1} \times \cdots \times \mathscr{D}_{h_q}\rightarrow \mathscr{D}_{k_1} \times \cdots \times \mathscr{D}_{k_p}$ is the global logical update function, which is induced by the nodal rules $(f^1,f^2,\cdots,f^p)$ defined in \eqref{eq2.1}; $x_t\in\mathbb{R}^n$ is the continuous state, $u_t\in\mathbb{R}^m$ is the continuous control, $A_i\in\mathbb{R}^{n\times n}$ and $B_i\in\mathbb{R}^{n\times m}$ define the linear dynamics for mode $i\in\{1,2,\cdots,N\}$. 

The ASSR of the system \eqref{eq5.1} is
\begin{align}\label{eq5.2}
\begin{array}{l}
\begin{cases}
\vec{\theta}_{t+1}=L \ltimes\vec{\gamma}_t \ltimes\vec{\theta}_t,\\
x_{t+1}=\left(\tilde{A}\ltimes\vec{\theta}_t\right)x_t+\left(\tilde{B}\ltimes\vec{\theta}_t\right)u_t,
\end{cases}
\end{array}
\end{align}
where $L\in{\cal L}_{N\times MN}$, $\tilde{A} := [A_1, A_2, \cdots, A_N] \in \mathbb{R}^{n \times Nn}$, $\tilde{B} := [B_1, B_2, \cdots, B_N] \in \mathbb{R}^{n \times Nm}$.

For each fixed logical control $\vec{\gamma}_t$ and realized logical state $\vec{\theta}_t$, the continuous dynamics are linear in $(x_t,u_t)$; the nonlinearity arises only through the discrete evolution of $\vec{\theta}_t$.

\subsubsection{Stochastic Case} We will further study the stochastic system where the state of a randomly switched logical dynamical system steers a switched linear system with additive Gaussian noise:
\begin{align}\label{eq6.1}
\begin{array}{l}
\begin{cases}
\theta_{t+1} = f_{\sigma_t}\big(\theta_t,\gamma_t\big),\\
x_{t+1} = A_{\iota(\theta_t)}x_t+B_{\iota(\theta_t)} u_t+F_{\iota(\theta_t)}w_t.
\end{cases}
\end{array}
\end{align}
where, in addition to the previously defined variables, $\sigma_t \in \{1, \dots, \ell\}$ is an i.i.d. random variable selecting the active logical update rule, distributed according to $\mathbb{P}(\sigma_t = i) = p_i$ with $p_i \in [0,1]$ and $\sum_{i=1}^\ell p_i = 1$. This formulation captures random logic reconfiguration induced by exogenous events such as communication uncertainty, task scheduling, or supervisory decisions. The process noise $w_t \sim \mathcal{N}(0, I_r)$ is independent, and $F_i \in \mathbb{R}^{n \times r}$ are the noise input matrices.

The ASSR of \eqref{eq6.1} is:
\begin{align}\label{eq6.2}
\begin{array}{l}
\begin{cases}
\vec{\theta}_{t+1}=\tilde{L}\ltimes \vec{\sigma}_t\ltimes \vec{\gamma}_t \ltimes \vec{\theta}_t,\\
x_{t+1}=\left(\tilde{A}\ltimes \vec{\theta}_t \right) x_t+\left(\tilde{B}\ltimes \vec{\theta}_t \right) u_t+\left(\tilde{F}\ltimes \vec{\theta}_t\right) w_t,
\end{cases}
\end{array}
\end{align}
where $\tilde{L}:=[L_1,L_2,\cdots,L_\ell]\in{\cal L}_{N\times \ell MN}$, $L_i$ is the structure matrix of $f_i$; $\vec{\sigma}_t\in\Delta_\ell$ is the vector form of the randomly switching signal $\sigma_t$; $\tilde{F} := [F_1, F_2, \cdots, F_N] \in \mathbb{R}^{n \times Nr}$.

\begin{rem}[Controllability Condition]
The results of this paper do not rely on the controllability of individual subsystems $(A_i,B_i)$.
Since the problem concerns finite-horizon performance optimization,
the Riccati recursions derived in the sequel are well-defined under the stated cost assumptions,
even in the presence of uncontrollable modes.
The resulting cost-to-go functions naturally quantify the associated performance limitations.
When logical control is available, the optimal policy may exploit logical decisions
to avoid or shorten visits to such unfavorable modes.
Issues of stabilizability and infinite-horizon optimality are beyond the scope of this paper.
\end{rem}

\subsubsection{Control Objective}
We assume that the hybrid state $(x_t,\theta_t)$ is fully observable at each time $t$.
\textbf{Crucial distinction from free-switching formulations:} Unlike conventional switched control where the switching signal is freely selected at each time step, the logical control $\gamma_t$ in this work acts on a logical dynamical system with intrinsic state evolution, memory, and transition constraints. As a result, switching decisions are not independent hybrid variables but are generated through the evolution of the logical state, introducing a forward-looking coupling between discrete decisions.

The co-design problem is therefore to jointly determine the logical control policy $\gamma_t(\cdot)$ and the continuous control policy $u_t(\cdot)$,
both possibly state-dependent, so as to minimize a quadratic performance index over a finite horizon.
The distinction between open-loop control sequences and state-feedback policies will be made explicit in Section~\ref{S3},
where the dynamic programming formulation is developed on the hybrid state $(x_t,\theta_t)$.

\medskip
\noindent\textbf{Deterministic case.}
We consider the finite-horizon cost functional
\begin{align}\label{cf}
J(x_0,&\theta_0,\mathbf{u},\bm{\gamma}):=
\frac{1}{2}\left[
\sum_{t=0}^{T-1}
\left(
x_t^\mathrm{T}C_{\iota(\theta_t)}x_t
+
u_t^\mathrm{T}D_{\iota(\theta_t)}u_t
\right)
+
x_T^\mathrm{T}Q_{\iota(\theta_T)}x_T
\right],
\end{align}
where $Q_i\succ{\bf 0}$, $C_i\succeq{\bf 0}$, and $D_i\succ{\bf 0}$ are given weighting matrices for all $i\in\{1,\dots,N\}$.

\medskip
\noindent\textbf{Stochastic case.}
When the logical dynamics and continuous subsystem are subject to stochastic disturbances,
we minimize the expected cost functional
\begin{align}\label{cfs}
&\mathcal{J}(x_0,\theta_0,\mathbf{u},\bm{\gamma})
:=\frac{1}{2}\left(
\sum_{t=0}^{T-1}
\mathbb{E}\!\left[
x_t^\mathrm{T}C_{\iota(\theta_t)}x_t
+
u_t^\mathrm{T}D_{\iota(\theta_t)}u_t
\right]
+
\mathbb{E}\!\left[
x_T^\mathrm{T}Q_{\iota(\theta_T)}x_T
\right]
\right),
\end{align}
where the expectation is taken with respect to the joint distribution of the random logical transitions $\{\sigma_t\}$
and the process noise $\{w_t\}$, conditioned on the initial state $(x_0,\theta_0)$ and the applied control policies.

In the next section, we show how the algebraic structure induced by the ASSR
enables a dynamic programming formulation on the augmented state space,
leading to Riccati-type recursions and optimal state-feedback laws for both logical and continuous controls.

\section{Main Results\label{S3}}

\subsection{Co-Design under Deterministic Logical Dynamics}
Recall the deterministic system \eqref{eq5.1} with its ASSR \eqref{eq5.2}. To enable co-design, we define the augmented state $v_t := \vec{\theta}_t \ltimes x_t$. This construction allows the logical state, logical control,
and continuous state to be treated within a unified linear framework. With Lemma \ref{lem2.0}, we can obtain the lifted dynamics:
\begin{align}
    \nonumber
v_{t+1}:=&\vec{\theta}_{t+1}\ltimes x_{t+1}\\\nonumber
=&L\ltimes\vec{\gamma}_t\ltimes\vec{\theta}_t\ltimes\left[ \left(\tilde{A}\ltimes\vec{\theta}_t\right) x_t+\left(\tilde{B}\ltimes\vec{\theta}_t\right) u_t \right]\\\nonumber
=&L\left(I_{MN}\otimes \tilde{A}\right)\left(I_M\otimes\Phi_N\right)\vec{\gamma}_t \vec{\theta}_tx_t+L\left(I_{MN}\otimes\tilde{B}\right)\left(I_M\otimes \Phi_N\right)\vec{\gamma}_t\vec{\theta}_t u_t\\\label{eq5.3} 
:=&{\bf A}\ltimes\vec{\gamma}_t\ltimes v_t+{\bf B}\ltimes \vec{\gamma}_t\ltimes\vec{\theta}_t\ltimes u_t.
\end{align}
The system matrices $\mathbf{A} \in \mathbb{R}^{Nn\times MNn}, \mathbf{B} \in \mathbb{R}^{Nn\times MNm}$ have expanded dimensions to incorporate the effect of the logical control. Under the ASSR, the augmented system is linear in $v_t$ for each admissible logical control $\vec{\gamma}_t$.

With the lifted dynamics \eqref{eq5.3}, the deterministic co-design problem is to jointly optimize the logical and continuous control sequences $\bm{\gamma}$ and $\mathbf{u}$ in order to minimize the quadratic cost $J$ defined in \eqref{cf}. This constitutes a complex combinatorial optimization problem due to the discrete nature of the logical control.

\begin{thm}\label{t5.1}
Consider the system described by the augmented dynamics \eqref{eq5.3}. For the given performance index \eqref{cf}, the following statements hold over the finite horizon $T$:
\begin{enumerate}
    \item An optimal policy pair $(\mathbf{u}^*, \bm{\gamma}^*)$ exists.
    \item For any fixed admissible logical control sequence $\bm{\gamma}$, 
    the corresponding optimal continuous control law $\mathbf{u}^*(\bm{\gamma})$ is uniquely determined.
    It admits a linear state-feedback form whose gain is uniquely determined by a Riccati recursion conditioned on the logical control.
    \item The optimal logical control admits a Bellman-optimal, state-dependent selection rule:
at each time $t$, given the current augmented state $v_t$,
the logical control $\vec{\gamma}_t^*(v_t)$ is selected via the Bellman minimization \eqref{t5.1.3}. This minimization is finite and exact, as $\Delta_M$ is finite.
\end{enumerate}
Moreover, such an optimal policy can be constructed recursively via the following Riccati backward recursion,
which yields a state-feedback representation applicable to all
$v_t\in\Delta_N\times \mathbb{R}^n$.
Initialize with
\[
K^*_T={\bf Q}:=\diag\left(Q_1,Q_2,\cdots,Q_N \right)\in\mathbb{R}^{Nn\times Nn}.
\]
For each $t=T-1,T-2,\cdots,0$, compute
\begin{align}\label{t5.1.1}
 G_t(\vec{\gamma}_t)&=-\left({\bf D}+\vec{\gamma}_t^\mathrm{T}{\bf B}^\mathrm{T}K^*_{t+1}{\bf B}\vec{\gamma}_t \right)^{-1}
 \vec{\gamma}_t^\mathrm{T}{\bf B}^\mathrm{T}K^*_{t+1}{\bf A}\vec{\gamma}_t\in\mathbb{R}^{Nm\times Nn},   
\end{align}
where $G_t(\vec{\gamma}_t)$ acts on the augmented state $v_t=\vec{\theta}_t\ltimes x_t$,
and the actual continuous control input is recovered via $\mathbf{1}_N^\mathrm{T}\ltimes G_t(\vec{\gamma}_t)$.
\begin{align}\label{t5.1.2}
    K_t({\vec{\gamma}_t})=&
    {\bf C}
    +\left({\bf A}\vec{\gamma}_t+{\bf B}\vec{\gamma}_tG_t({\vec{\gamma}_t})\right)^\mathrm{T}
    K_{t+1}^*
    \left({\bf A}\vec{\gamma}_t+{\bf B}\vec{\gamma}_tG_t({\vec{\gamma}_t})\right)+G_{t}({\vec{\gamma}_t})^\mathrm{T}{\bf D}G_{t}({\vec{\gamma}_t})
    \in\mathbb{R}^{Nn\times Nn},
\end{align}
where ${\bf C}:=\diag\left(C_1,\cdots,C_N\right)\in\mathbb{R}^{Nn\times Nn}$
and ${\bf D}:=\diag\left(D_1,\cdots,D_N\right)\in\mathbb{R}^{Nm\times Nm}$.
The optimal logical control at time $t$ is determined by the Bellman minimization\footnote{
For a fixed logical control $\vec{\gamma}_t$, the matrix $K_t(\vec{\gamma}_t)$ represents the cost-to-go
when $\vec{\gamma}_t$ is applied at time $t$ and the optimal policy is followed thereafter.
Its dependence on future logical decisions is captured implicitly through $K_{t+1}^*$.
}
\begin{align}\label{t5.1.3}
\vec{\gamma}_t^*(v_t) := \arg\min_{\vec{\gamma}_t\in\Delta_M}\frac{1}{2}v_t^\mathrm{T}K_t(\vec{\gamma}_t)v_t,
\end{align}
and define $\cal K_t^*$ as the cost-to-go matrix corresponding to the minimizing logical control $\vec{\gamma}_t^*$.
\\
The optimal continuous control is then given by
\begin{align}\label{t5.1.4}
    u^*_t={\bf G}_t({\vec{\gamma}^*_t})v_t,\quad t\in[0,T-1],
\end{align}
where $\mathbf{G}_t(\vec{\gamma}_t) = \mathbf{1}_N^\mathrm{T}\ltimes G_t(\vec{\gamma}_t)\in\mathbb{R}^{m\times Nn}$.
The resulting minimal cost is
\[
J_{min}=\frac{1}{2}v_0^\mathrm{T}K_0^*v_0.
\]
\end{thm}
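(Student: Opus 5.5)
The plan is backward dynamic programming on the hybrid state $(x_t,\theta_t)$, written through the augmented state $v_t=\vec{\theta}_t\ltimes x_t$. The target is the induction hypothesis that the optimal cost-to-go from time $t$ is
\[
V_t(v_t)=\tfrac12 v_t^\mathrm{T}K_t^*v_t ,
\]
where $K_t^*$ stands for the matrix selected by \eqref{t5.1.3} at the current logical state. Since $\vec{\theta}_t$ ranges over the finite set $\Delta_N$, this is a quadratic form in $x_t$ for each fixed $\theta_t$.

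First I rewrite the problem data in augmented form. Because $v_t$ is $\vec{\theta}_t\otimes x_t$ with $\vec{\theta}_t=\delta_N^{i}$, we have $x_t^\mathrm{T}C_{\iota(\theta_t)}x_t=v_t^\mathrm{T}{\bf C}v_t$ and $x_T^\mathrm{T}Q_{\iota(\theta_T)}x_T=v_T^\mathrm{T}{\bf Q}v_T$. Likewise, setting $\mu_t:=\vec{\theta}_t\ltimes u_t$ gives $u_t^\mathrm{T}D_{\iota(\theta_t)}u_t=\mu_t^\mathrm{T}{\bf D}\mu_t$, and $u_t=({\bf 1}_N^\mathrm{T}\otimes I_m)\mu_t$ recovers the actual input. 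By \eqref{eq5.3}, $v_{t+1}={\bf A}\vec{\gamma}_tv_t+{\bf B}\vec{\gamma}_t\mu_t$. Here ${\bf A}\vec{\gamma}_t$ and ${\bf B}\vec{\gamma}_t$ are the block columns selected by $\vec{\gamma}_t$, using $\vec{\gamma}_t\ltimes v_t=(\vec{\gamma}_t\otimes I)v_t$.

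Next comes the induction step, with base case $V_T=\frac12v_T^\mathrm{T}{\bf Q}v_T$. Assume $V_{t+1}$ has the stated form and fix $\vec{\gamma}_t$. The stage problem
\[
\min_{\mu}\ \tfrac12\left[v^\mathrm{T}{\bf C}v+\mu^\mathrm{T}{\bf D}\mu+({\bf A}\vec{\gamma}v+{\bf B}\vec{\gamma}\mu)^\mathrm{T}K_{t+1}^*({\bf A}\vec{\gamma}v+{\bf B}\vec{\gamma}\mu)\right]
\]
is a strictly convex quadratic in $\mu$. Strict convexity holds because ${\bf D}\succ0$ and $K_{t+1}^*\succeq0$, the latter by induction since ${\bf Q}\succ0$ and ${\bf C}\succeq0$. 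Hence the Hessian ${\bf D}+\vec{\gamma}^\mathrm{T}{\bf B}^\mathrm{T}K_{t+1}^*{\bf B}\vec{\gamma}$ is invertible. The first-order condition gives the unique minimizer $\mu=G_t(\vec{\gamma})v$ as in \eqref{t5.1.1}. Substituting it back and completing the square yields $\frac12v^\mathrm{T}K_t(\vec{\gamma})v$ with $K_t$ as in \eqref{t5.1.2}, and $K_t(\vec{\gamma})\succeq0$ because each term is positive semidefinite.

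Minimizing then over the finite set $\Delta_M$ gives \eqref{t5.1.3}, and the minimum is attained. This establishes $V_t=\frac12v_t^\mathrm{T}K_t^*v_t$ and, inductively, claims 1 and 3. Claim 2 follows from the same computation with $\vec{\gamma}$ held fixed along the horizon: each stage minimizer is unique, so $\mathbf u^*(\bm\gamma)$ is unique. Optimality of the resulting policy over all admissible policies then follows from the standard verification argument: summing $V_t(v_t)-V_{t+1}(v_{t+1})\le$ the stage cost along any trajectory, with equality under the constructed policy, gives $J\ge V_0(v_0)=J_{\min}$.

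The main obstacle is making sure the augmented minimization is faithful to the original one. The issue is that $\mu$ is a priori an arbitrary vector in $\mathbb R^{Nm}$, while it must have the structure $\vec{\theta}_t\otimes u_t$. I would handle this by exploiting the block structure. Since $v=\delta_N^i\otimes x$ and ${\bf C},{\bf D},{\bf Q}$ are block diagonal, the columns of ${\bf A}\vec{\gamma}$ and ${\bf B}\vec{\gamma}$ acting on the $i$-th block involve only mode $i$; this is the role of $\Phi_N$ from Lemma \ref{lem2.0}. Consequently the off-block components of $\mu$ do not enter the dynamics and only add a positive ${\bf D}$ term, so at the optimum they vanish. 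Checking this against the definition of ${\bf B}$ through $\Phi_N$ and $L$ is the delicate bookkeeping. Once it is done, it also justifies recovering the true input as $u_t^*=({\bf 1}_N^\mathrm{T}\otimes I_m)G_t(\vec{\gamma}_t^*)v_t={\bf G}_t(\vec{\gamma}_t^*)v_t$, as in \eqref{t5.1.4}.
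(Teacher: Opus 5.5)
Your route (backward DP on $v_t$, an inner LQ step for each fixed $\vec\gamma_t$, an outer finite minimization over $\Delta_M$) is the paper's, but two steps fail.

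\textbf{First gap.} The step you single out as the main obstacle is resolved incorrectly: the off-block components of $\mu$ \emph{do} enter the dynamics. Write $\mu=\sum_j\delta_N^j\otimes\mu_j$ and unwind ${\bf B}=L(I_{MN}\otimes\tilde{B})(I_M\otimes\Phi_N)$. This gives ${\bf B}\vec\gamma_t(\delta_N^j\otimes\mu_j)=(L\vec\gamma_t\delta_N^j)\otimes B_j\mu_j$. Suppose $L\vec\gamma_t\delta_N^j=L\vec\gamma_t\delta_N^i$ for some $j\neq i$ (non-injective transitions, e.g.\ the repeated columns of $L$ in Example~\ref{ex:simple_logical}). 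Then $\mu_j$ acts through $B_j$ on the very block that carries the true next state, so the unconstrained problem over $\mathbb{R}^{Nm}$ is a strict relaxation. Equivalently, ${\bf D}+\vec\gamma_t^\mathrm{T}{\bf B}^\mathrm{T}K_{t+1}^*{\bf B}\vec\gamma_t$ has nonzero off-diagonal blocks, and inverting it is not the same as inverting its $(i,i)$ block.

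A minimal instance: $N=2$, $n=m=1$, $A_i=B_i=D_i=Q_i=1$, $C_i=0$, $T=1$, every column of $L$ equal to $\delta_2^1$, and $v_0=\delta_2^1\otimes 1$. The true optimum is $u_0^*=-\tfrac12$ with cost $\tfrac14$. By contrast, \eqref{t5.1.1}--\eqref{t5.1.2} give $G_0v_0=(-\tfrac13,-\tfrac13)^\mathrm{T}$, hence ${\bf G}_0v_0=-\tfrac23$ and $\tfrac12v_0^\mathrm{T}K_0v_0=\tfrac16$, a value no admissible control attains.

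The faithful minimization is over $u$ with $\mu=\vec\theta_t\otimes u$, whose Hessian is the $(i,i)$ block $\vec\theta_t^\mathrm{T}(\cdot)\vec\theta_t$. The paper's appendix differentiates exactly this, but then passes to ``$\vec\theta_{T-1}u^*_{T-1}=G_{T-1}v_{T-1}$'', which is the same unjustified leap. So the full-matrix gain cannot be established without an extra hypothesis that keeps all matrices block diagonal (e.g.\ $\theta\mapsto L\vec\gamma\theta$ injective for every $\vec\gamma$). Otherwise it must be replaced by the mode-wise gain $-(D_i+B_i^\mathrm{T}P_kB_i)^{-1}B_i^\mathrm{T}P_kA_i$, where $k$ is the successor mode and $P_k$ is the $(k,k)$ block of the next-stage matrix.

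\textbf{Second gap.} Your induction hypothesis, that $V_t$ is a quadratic form in $x_t$ for each fixed $\theta_t$, is false. The minimizer in \eqref{t5.1.3} depends on $x_t$, not only on $\theta_t$. So already $V_{T-1}(\delta_N^i\otimes x)=\min_{\vec\gamma\in\Delta_M}\tfrac12x^\mathrm{T}K_{T-1}^{(ii)}(\vec\gamma)x$, with $K^{(ii)}$ the $(i,i)$ block, is a pointwise minimum of quadratic forms. It is genuinely piecewise quadratic whenever the candidates are not Loewner-ordered, as when different $\vec\gamma$ lead to successor modes with unordered $Q_k$.

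Hence at stage $T-2$ the function of $\mu$ you minimize is a minimum of strictly convex quadratics, not one strictly convex quadratic. The matrix ``$K_{T-1}^*$'' depends on $v_{T-1}$, i.e.\ on the very $\mu$ being optimized, so it cannot be frozen before writing the first-order condition.

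The repair is to carry the whole finite family of sequence-indexed matrices and exchange the minima, $\min_u\min_{\vec{\bm{\gamma}}_{t+1:T-1}}=\min_{\vec{\bm{\gamma}}_{t+1:T-1}}\min_u$. Each inner problem is then a genuine LQ step, and the induction hypothesis becomes $V_t(v)=\min_{\vec{\bm{\gamma}}_{t:T-1}}\tfrac12v^\mathrm{T}K_t(\vec{\bm{\gamma}}_{t:T-1})v$. This is what Corollary~\ref{cor:sequence_opt} and Algorithm~\ref{alg1} actually compute. The single-matrix recursion in $K_{t+1}^*$ is meaningful only in two situations: along a fixed logical sequence (your claim~2, which is fine), or after the optimal sequence from a given $v_0$ has been found by that enumeration. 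The paper's appendix only carries out the step from $K_T^*={\bf Q}$ and then says ``iterating'', so it does not close this gap either. Your verification argument is sound once it is run with this piecewise-quadratic $V_t$ and the corrected gains.
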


\emph{Proof.}
See Appendix \ref{app2}.
\hfill $\Box$

The following corollary provides an equivalent interpretation of the Bellman-optimal state-feedback policy in terms of logical control sequences, which is useful for implementation.

\begin{cor}
\label{cor:sequence_opt}
For the deterministic co-design problem, the Bellman-optimal state-feedback policy
characterized in Theorem~\ref{t5.1} admits an equivalent sequence-based interpretation.
Specifically, from any state $v_t$ at time $t=0,1,\cdots,T-1$, the optimal logical control
$\vec{\gamma}_t^*(v_t)$ coincides with the first element of an optimal remaining logical
control sequence $\vec{\bm{\gamma}}_{t:T-1}^*$ that minimizes
\[
\vec{\bm{\gamma}}_{t:T-1}^* = \arg\min_{\vec{\bm{\gamma}}_{t:T-1} \in \mathcal{S}_t}
\frac{1}{2} v_t^\mathrm{T} K_t(\vec{\bm{\gamma}}_{t:T-1}) v_t.
\]
Here, $\mathcal{S}_t$ denotes the set of all admissible logical control sequences from time
$t$ to $T-1$, and $K_t(\vec{\bm{\gamma}}_{t:T-1})$ is obtained by recursively applying
\eqref{t5.1.1}--\eqref{t5.1.2} backward from $K_T=\mathbf{Q}$ using the elements of
$\vec{\bm{\gamma}}_{t:T-1}$.
This equivalence follows directly from Bellman's principle of optimality and does not
constitute an alternative optimization problem.
\end{cor}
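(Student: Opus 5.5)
We argue by backward induction on $t$, exploiting the fact that the deterministic augmented dynamics \eqref{eq5.3} are Markov in the hybrid state $v_t=\vec{\theta}_t\ltimes x_t$. Let $V_t(v_t)$ be the optimal cost-to-go from $v_t$ at time $t$ over all admissible (state-dependent) logical and continuous policies. Let $W_t(v_t;\vec{\bm{\gamma}}_{t:T-1})$ be the optimal continuous cost when the remaining logical sequence is fixed. By Theorem~\ref{t5.1}(2), $W_t(v_t;\vec{\bm{\gamma}}_{t:T-1})=\frac12 v_t^\mathrm{T}K_t(\vec{\bm{\gamma}}_{t:T-1})v_t$, where $K_t(\vec{\bm{\gamma}}_{t:T-1})$ comes from iterating \eqref{t5.1.1}--\eqref{t5.1.2} from $K_T=\mathbf{Q}$ with the prescribed $\vec{\gamma}_s$ in place of the minimizing ones.

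\textbf{Step 1: interchange of minimizations.} Because the system is deterministic and $v_t$ is given, any state-feedback logical policy from time $t$ produces a single realized sequence $\vec{\bm{\gamma}}_{t:T-1}\in\mathcal{S}_t$. Conversely, every sequence in $\mathcal{S}_t$ is realized by an open-loop, hence admissible, policy. Therefore
\[
V_t(v_t)=\min_{\vec{\bm{\gamma}}_{t:T-1}\in\mathcal{S}_t}\ \min_{\mathbf{u}}\ J_t=\min_{\vec{\bm{\gamma}}_{t:T-1}\in\mathcal{S}_t}\tfrac12 v_t^\mathrm{T}K_t(\vec{\bm{\gamma}}_{t:T-1})v_t .
\]
This minimum is attained because $\mathcal{S}_t$ is finite, with $|\mathcal{S}_t|=M^{T-t}$.

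\textbf{Step 2: decomposition of the sequence minimum.} Split the minimum as $\min_{\vec{\gamma}_t}\min_{\vec{\bm{\gamma}}_{t+1:T-1}}$. For fixed $(\vec{\gamma}_t,u_t)$, the state $v_{t+1}$ is determined by \eqref{eq5.3}, and the inner minimum over the tail equals $V_{t+1}(v_{t+1})$ by Step 1 at time $t+1$. Hence
\[
V_t(v_t)=\min_{\vec{\gamma}_t}\min_{u_t}\Big[\tfrac12(x_t^\mathrm{T}C x_t+u_t^\mathrm{T}Du_t)+V_{t+1}(v_{t+1})\Big].
\]
By the induction hypothesis, $V_{t+1}(v_{t+1})=\frac12 v_{t+1}^\mathrm{T}K^*_{t+1}(v_{t+1})v_{t+1}$. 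Substituting this and minimizing in $u_t$ yields $\frac12 v_t^\mathrm{T}K_t(\vec{\gamma}_t)v_t$, which is exactly the right-hand side of \eqref{t5.1.3}.

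\textbf{Step 3: conclusion.} Consequently, any first element of a minimizing sequence $\vec{\bm{\gamma}}^*_{t:T-1}$ attains the Bellman minimum in \eqref{t5.1.3}. Conversely, take a Bellman minimizer $\vec{\gamma}^*_t$ and concatenate it with an optimal tail from the resulting $v_{t+1}$; this gives a minimizing sequence. The two characterizations therefore coincide, up to the choice of tie-breaking, and this is the sense in which we claim the statement.

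\textbf{Main obstacle.} The delicate point is that $K^*_{t+1}$ in \eqref{t5.1.2} is selected by a minimization that depends on the state. The quantity $v^\mathrm{T}K_t(\vec{\gamma}_t)v$ is therefore piecewise quadratic, not a single quadratic form. We handle this by running the entire argument at the level of the value functions $V_t$ rather than fixed matrices, interpreting $K^*_{t+1}$ as evaluated at the successor state.
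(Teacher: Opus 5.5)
Your Step~1 is correct, and it is the heart of the matter. Because the dynamics are deterministic, every feedback policy from a given $v_t$ collapses to a realized sequence, so $V_t(v_t)=\min_{\tau\in\mathcal S_t}\frac12 v_t^\mathrm{T}K_t(\tau)v_t$. The paper offers nothing beyond an appeal to Bellman's principle; Appendix~\ref{app2} only works out stage $T-1$, where $K_T^*=\mathbf Q$ is a single fixed matrix.

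\textbf{The gap is the last sentence of Step~2.} After you substitute $V_{t+1}(v_{t+1})=\frac12 v_{t+1}^\mathrm{T}K^*_{t+1}(v_{t+1})v_{t+1}$, the map $u_t\mapsto\frac12 u_t^\mathrm{T}D_{\iota(\theta_t)}u_t+V_{t+1}(v_{t+1}(u_t))$ is a pointwise minimum of finitely many convex quadratics, not a quadratic. So the first-order condition that yields \eqref{t5.1.1}--\eqref{t5.1.2} with one matrix $K^*_{t+1}$ is not available, and you have not shown that the minimum equals $\frac12 v_t^\mathrm{T}K_t(\vec\gamma_t)v_t$. The remedy in your ``main obstacle'' paragraph is circular. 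The successor at which you evaluate $K^*_{t+1}$ depends on $u_t=\mathbf G_t(\vec\gamma_t)v_t$, and that gain is itself computed from $K^*_{t+1}$. The remedy is also ambiguous. A self-consistent tail $\bar\tau$ (one whose gain drives the state to a $v_{t+1}$ from which $\bar\tau$ is optimal) is only a coordinatewise minimizer in $(u_t,\tau)$. A minimum of convex quadratics can have several local minima, so such a $\bar\tau$ need not give the true $Q$-value. In that case the argmin in \eqref{t5.1.3} built from it can differ from the first element of an optimal sequence.

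\textbf{The missing step} is an explicit exchange of the two minimizations plus a consistency check. For $\tau\in\mathcal S_{t+1}$ put $f(u,\tau):=\frac12\bigl(x_t^\mathrm{T}C_{\iota(\theta_t)}x_t+u^\mathrm{T}D_{\iota(\theta_t)}u\bigr)+\frac12 v_{t+1}(u)^\mathrm{T}K_{t+1}(\tau)v_{t+1}(u)$. For fixed $\tau$ this is strictly convex with minimizer $u_\tau=\mathbf G_t(\vec\gamma_t,\tau)v_t$. Because $\mathcal S_{t+1}$ is finite,
\[
\min_{u}\,\min_{\tau}f(u,\tau)=\min_{\tau}\,\tfrac12 v_t^\mathrm{T}K_t(\vec\gamma_t,\tau)v_t .
\]
If $\tau^*$ attains the right-hand side, then $f(u_{\tau^*},\tau^*)=\min_{u,\tau}f\le\min_\tau f(u_{\tau^*},\tau)\le f(u_{\tau^*},\tau^*)$. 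Hence $\tau^*$ is an optimal tail from the successor reached under $u_{\tau^*}$.

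This fixes the sense in which $K^*_{t+1}$ in \eqref{t5.1.1}--\eqref{t5.1.3} must be read: $K^*_{t+1}=K_{t+1}(\tau^*)$, where $\tau^*$ is the \emph{global} minimizer for the given $(v_t,\vec\gamma_t)$, not merely a self-consistent one. With that reading:
\begin{itemize}
\item the objective of \eqref{t5.1.3} is $\min_\tau\frac12 v_t^\mathrm{T}K_t(\vec\gamma_t,\tau)v_t$;
\item its minimum over $\vec\gamma_t$ is your Step~1 value split as $\min_{\vec\gamma_t}\min_\tau$;
\item your Step~3 goes through, provided that in the converse you concatenate with the tail $\tau^*$, i.e.\ use the optimal $u_t$.
\end{itemize}
This is also what Algorithm~\ref{alg1} actually implements by enumerating tails.
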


\begin{rem}[Relation to Free-Switching Formulations]\label{rem:free_switch}
Theorem~\ref{t5.1} recovers classical optimal control of switched systems with
\emph{free switching} as a special case.
When the logical subsystem has no internal dynamics
($\vec{\theta}_{t+1} = I_N \vec{\gamma}_t$, $M = N$),
the logical state acts as a memoryless mode selector, and the framework reduces
to standard free-switching formulations.

In the general setting considered here, the logical state evolves dynamically.
As a result, switching decisions are evaluated through the Riccati-induced
value function and constrained by the logical state trajectory, thereby
incorporating memory and transition constraints directly into the optimality
conditions, rather than imposing them externally.
\hfill $\Box$
\end{rem}

\begin{rem}[Interpretation and Implementation]\label{rem:interpretation_and_implementation}
Theorem~\ref{t5.1} characterizes a Bellman-optimal state-feedback policy
$\vec{\gamma}_t^*(v_t)$.
Corollary~\ref{cor:sequence_opt} provides an equivalent
\emph{interpretation} of this policy in terms of remaining logical sequences,
which follows directly from Bellman's principle of optimality and does not
constitute an alternative optimization problem.

A key enabler for practical implementation is that, for any fixed admissible
logical subsequence $\vec{\bm{\gamma}}_{t:T-1}$, the matrices
$K_t(\vec{\bm{\gamma}}_{t:T-1})$ and $G_t(\vec{\bm{\gamma}}_{t:T-1})$
depend only on the subsequence and system parameters, and not on the specific
state value $v_t$.
This permits their offline precomputation.
Algorithm~\ref{alg1} exploits this structure by enumerating logical sequences
offline and selecting $\vec{\gamma}_t^*$ online via \eqref{t5.1.3},
yielding a state-feedback policy defined over the entire state space
$\Delta_N \times \mathbb{R}^n$.
\hfill $\Box$
\end{rem}

\begin{algorithm}[h!]
\caption{Precomputation-Based Co-Design Algorithm\label{alg1}}
\hspace*{0.02in} {\bf Input:}
Horizon $T$; system matrices $\mathbf{A}, \mathbf{B}$; weighting matrices $\mathbf{C}, \mathbf{D}, \mathbf{Q}$.\\
\hspace*{0.02in} {\bf Output:}
Optimal logical control sequence $\vec{\bm{\gamma}}^* = (\vec{\gamma}^*_0, \dots, \vec{\gamma}^*_{T-1})$ and associated gain matrices $\{K_t(\cdot), G_t(\cdot)\}$.

\begin{algorithmic}[1]
\STATE Initialize: $\mathcal{S}_T \leftarrow \{\text{empty sequence}\}$
\STATE $K_T^* \leftarrow \mathbf{Q}$
\FOR{$t = T-1$ down to $0$}
    \STATE $\mathcal{S}_t \leftarrow \{\text{empty sequence}\}$
    \FOR{each logical sequence $\vec{\bm{\gamma}}_{t+1:T-1} \in \mathcal{S}_{t+1}$}
        \FOR{each possible $\vec{\gamma}_t \in \Delta_M$}
            \STATE Extend sequence: $\vec{\bm{\gamma}}_{t:T-1} = (\vec{\gamma}_t, \vec{\bm{\gamma}}_{t+1:T-1})$
            \STATE Compute $G_t(\vec{\gamma}_t)$ using \eqref{t5.1.1} with $K_{t+1}^* = K_{t+1}(\vec{\bm{\gamma}}_{t+1:T-1})$
            \COMMENT{Here $G_t(\vec{\gamma}_t)$ depends on $\vec{\gamma}_t$ and the subsequence $\vec{\bm{\gamma}}_{t+1:T-1}$ through $K_{t+1}^*$}
            \STATE Compute $K_t(\vec{\gamma}_t)$ using \eqref{t5.1.2}
            \STATE Store $G_t(\vec{\bm{\gamma}}_{t:T-1}) \leftarrow G_t(\vec{\gamma}_t)$, $K_t(\vec{\bm{\gamma}}_{t:T-1}) \leftarrow K_t(\vec{\gamma}_t)$
            \STATE Add $\vec{\bm{\gamma}}_{t:T-1}$ to $\mathcal{S}_t$
        \ENDFOR
    \ENDFOR
\ENDFOR
\STATE // Online selection for given $v_0$
\STATE Find $\vec{\bm{\gamma}}^* = \arg\min\limits_{\vec{\bm{\gamma}} \in \mathcal{S}_0} \frac{1}{2} v_0^\mathrm{T} K_0(\vec{\bm{\gamma}}) v_0$
\STATE Output $\vec{\bm{\gamma}}^*$ and the associated gain matrices $\{K_t(\vec{\bm{\gamma}}^*_{t:T-1}), G_t(\vec{\bm{\gamma}}^*_{t:T-1})\}_{t=0}^{T-1}$.
\end{algorithmic}
\end{algorithm}

\begin{rem}[Complexity comparison with MIMPC]
    The computational architecture of the proposed hierarchical algorithm fundamentally differs from that of standard Mixed-Integer Model Predictive Control (MIMPC). In MIMPC, the co-design problem is formulated as a mixed-integer quadratic program (MIQP) that must be solved online at each sampling instant \cite{axe07}. The worst-case complexity of such problems grows exponentially with the prediction horizon $T$ and the number of discrete decisions per step, as branch-and-bound methods may need to explore a combinatorial number of nodes before reaching optimality \cite{sho25,ver25}. Indeed, mixed-integer programming is in general NP-hard. By contrast, our approach adopts a fundamentally different architecture: the combinatorial complexity is shifted to an offline precomputation stage, where all possible logical control sequences of length $T$ are enumerated and the corresponding Riccati matrices $\{K_t,G_t\}$ are precomputed. This offline enumeration incurs a complexity of $O(M^T)$, where $M$ is the number of logical control options. However, online execution then reduces to a state-dependent selection from these precomputed solutions, requiring only $O(M)$ evaluations of quadratic forms per time step—independent of the prediction horizon $T$. This eliminates the need for online mixed-integer solvers entirely. The price paid is an offline storage requirement that grows exponentially with $T$, but for moderate horizons typical in receding-horizon implementations, this remains manageable. The resulting online complexity is deterministic and predictable, enabling real-time feasibility even on resource-constrained platforms. This architectural distinction echoes the philosophy of explicit MPC \cite{ken24}, where offline precomputation of the control law enables constant-time online evaluation.
\end{rem}

\subsection{Co-Design under Stochastic Logical Dynamics}
To account for uncertainty and randomness inherent in many practical systems, we extend the co-design framework developed in the deterministic setting
to stochastic logical dynamics.
In this case, the evolution of the logical subsystem is governed by probabilistic transitions,
and the continuous dynamics are affected by additive process noise.

Recall the stochastic switched system~\eqref{eq6.1} and its ASSR~\eqref{eq6.2}, where the logical evolution follows a probabilistic transition mechanism and the continuous dynamics are perturbed by additive noise.

Following the deterministic construction,
we introduce the augmented state $v_t := \vec{\theta}_t \ltimes x_t$.
Then the augmented dynamics can be written as
\begin{align}\nonumber
v_{t+1}=&\tilde{L} \vec{\sigma}_t \vec{\gamma}_t\vec{\theta}_t\tilde{A}\vec{\theta}_tx_t+\tilde{L}\vec{\sigma}_t \vec{\gamma}_t\vec{\theta}_t\tilde{B}\vec{\theta}_t u_t+\tilde{L}  \vec{\sigma}_t \vec{\gamma}_t\vec{\theta}_t\tilde{F}\vec{\theta}_t w_t \\\nonumber
=&\tilde{L}\left(I_{\ell MN}\otimes\tilde{A}\right)\left(I_{\ell M}\otimes \Phi_N\right)\vec{\sigma}_t\vec{\gamma}_t\vec{\theta}_tx_t+\tilde{L}\left(I_{\ell MN}\otimes\tilde{B}\right)\left(I_{\ell M}\otimes \Phi_N\right)\vec{\sigma}_t\vec{\gamma}_t\vec{\theta}_tu_t\\\nonumber
&+\tilde{L}\left(I_{\ell MN}\otimes\tilde{F}\right)\left(I_{\ell M}\otimes \Phi_N\right)\vec{\sigma}_t\vec{\gamma}_t\vec{\theta}_t w_t\\\label{eq6.3}
    :=&\tilde{\bf A}\vec{\sigma}_t\vec{\gamma}_tv_t+\tilde{\bf B}\vec{\sigma}_t\vec{\gamma}_t\vec{\theta}_tu_t+\tilde{\bf F}\vec{\sigma}_t\vec{\gamma}_t\vec{\theta}_t w_t,
\end{align}
where $\tilde{\bf A} \in \mathbb{R}^{Nn\times \ell MNn}$,
$\tilde{\bf B} \in \mathbb{R}^{Nn\times \ell MNm}$,
$\tilde{\bf F} \in \mathbb{R}^{Nn\times \ell MNr}$,
and the symbol ``$\ltimes$'' is omitted for notational simplicity. The augmented dynamics are affine in $(v_t,u_t)$ with multiplicative stochasticity induced by $\sigma_t$.

Within this stochastic framework, the co-design objective is to find the control sequences $\bm{\gamma}$ and $\mathbf{u}$ that minimize the expected cost $\mathcal{J}$ defined in \eqref{cfs}. This requires optimizing the control policy in anticipation of the uncertain system evolution, making it a significantly more challenging problem than its deterministic counterpart.

The structure of the optimal co-design policy under stochastic logical dynamics
is characterized by the following result.
\begin{thm}
    \label{t6.1}
Consider the stochastic system described by the augmented dynamics \eqref{eq6.3}. For the given performance index \eqref{cfs}, 
\begin{enumerate}
\item There exists an optimal policy pair $(\mathbf{u}^*, \bm{\gamma}^*)$
within the class of admissible causal state-feedback policies
that minimizes the expected cost~\eqref{cfs}.
\item For any admissible logical control sequence $\bm{\gamma}$,
the corresponding optimal continuous control law $\mathbf{u}^*(\bm{\gamma})$
is unique and admits a linear state-feedback form.
The feedback gain is uniquely determined by a generalized Riccati-type recursion
that incorporates both stochastic switching and process noise.
\item The optimal logical control admits a Bellman-optimal
state-feedback characterization.
At each time step, for any admissible augmented state $v_t$,
the logical control $\vec{\gamma}_t^*(v_t)$
is selected by minimizing the corresponding expected cost-to-go.
If multiple logical controls attain the same minimum,
any such minimizer is optimal.
\end{enumerate}
Moreover, such an optimal policy can be constructed via a backward generalized Riccati recursion, which is structurally analogous to the deterministic case despite the presence of multiplicative stochasticity and process noise.
Specifically, initialize ${\cal K}^*_T = {\bf Q}$.
For $t = T-1, T-2, \dots, 0$, define
\begin{align}
    {\cal G}_{t}({\vec{\gamma}_{t}})=&-\left[{\bf D}+\vec{\gamma}_{t}^\mathrm{T}\sum\limits_{i=1}^{\ell}p_i\left((\delta_{\ell}^i)^\mathrm{T}\tilde{\bf B}^\mathrm{T}{\cal K}^*_{t+1}\tilde{\bf B}\delta_{\ell}^i\right)\vec{\gamma}_{t}\right]^{-1}\left[\vec{\gamma}_{t}^\mathrm{T}\sum_{i=1}^{\ell}p_i\left((\delta_{\ell}^i)^\mathrm{T}\tilde{\bf B}^\mathrm{T}{\cal K}^*_{t+1}\tilde{\bf A}\delta_{\ell}^i\right)\vec{\gamma}_{t}\right]\in\mathbb{R}^{Nm\times Nn},
\end{align}
\begin{align}\nonumber
    {\cal K}_{t}(\vec{\gamma}_t)={\bf C}+{\cal G}_{t}^\mathrm{T}(\vec{\gamma}_t){\bf D}\mathcal{G}_{t}+\vec{\gamma}^\mathrm{T}_{t}\sum_{i=1}^{\ell}p_i\left((\delta_{\ell}^i)^\mathrm{T}\tilde{\bf A}^\mathrm{T}{\cal K}^*_{t+1}\tilde{\bf A}\delta_{\ell}^i\right)\vec{\gamma}_{t}+\vec{\gamma}^\mathrm{T}_{t}\sum_{i=1}^{\ell}p_i\left((\delta_{\ell}^i)^\mathrm{T}\tilde{\bf A}^\mathrm{T}{\cal K}^*_{t+1}\tilde{\bf B}\delta_{\ell}^i\right)\vec{\gamma}_{t}\mathcal{G}_{t}(\vec{\gamma}_t)\\
    +\mathcal{G}_{t}^\mathrm{T}(\vec{\gamma}_t)\vec{\gamma}_{t}^\mathrm{T}\sum_{i=1}^{\ell}p_i\left((\delta_{\ell}^i)^\mathrm{T}\tilde{\bf B}^\mathrm{T}{\cal K}^*_{t+1}\tilde{\bf A}\delta_{\ell}^i\right)\vec{\gamma}_{t}+\mathcal{G}_{t}^\mathrm{T}(\vec{\gamma}_t)\vec{\gamma}_{t}^\mathrm{T}\sum_{i=1}^{\ell}p_i\left((\delta_{\ell}^i)^\mathrm{T}\tilde{\bf B}^\mathrm{T}{\cal K}^*_{t+1}\tilde{\bf B}\delta_{\ell}^i\right)\vec{\gamma}_{t}\mathcal{G}_{t}(\vec{\gamma}_t)\in\mathbb{R}^{Nn\times Nn}
\end{align}\\
The Bellman value function admits a decomposition into a
state-dependent quadratic form and a state-independent noise-induced constant term, and the optimal logical control is characterized by the Bellman minimization.
\begin{align}\label{t6.1.0}
\vec{\gamma}_t^*(v_t):=\arg\min_{\vec{\gamma}_t\in\Delta_M}
\Bigg[
\frac{1}{2}v_t^\mathrm{T}{\cal K}_t(\vec{\gamma}_t)v_t
+\frac{1}{2}\mathrm{tr}\!\left(
(\vec{\gamma}_{t}\vec{\theta}_{t})^\mathrm{T}
\sum_{i=1}^{\ell}p_i
(\delta_{\ell}^i)^\mathrm{T}\tilde{\bf F}^\mathrm{T}
{\cal K}^*_{t+1}\tilde{\bf F}\delta_{\ell}^i
\,\vec{\gamma}_{t}\vec{\theta}_{t}
\right)
\Bigg],
\end{align}
and define ${\cal K}_t^*$ as the cost-to-go matrix corresponding to the
minimizing logical control $\vec{\gamma}_t^*$.\\The optimal continuous control:
\begin{align}\label{t6.1.1}
    u^*_t=\tilde{\bf {G}}_t(\vec{\gamma}^*_t)v_t,
\end{align}
where $\tilde{\bf {G}}_t(\vec{\gamma}_t) = \mathbf{1}_N^\mathrm{T}\ltimes \mathcal{G}_t(\vec{\gamma}_t)$.\\
The minimal expected cost is\footnote{}
\begin{align}\label{t6.1.2}
\mathcal{J}_{min}=\frac{1}{2}v_0^\mathrm{T}\mathcal{K}^*_0v_0+\frac{1}{2}\sum_{t=0}^{T-1}\mathrm{tr}\left[\left(\vec{\gamma}_{t}^*\vec{\theta}_{t}\right)^\mathrm{T}\sum_{i=1}^{\ell}p_i\left((\delta_{\ell}^i)^\mathrm{T}\tilde{\bf F}^\mathrm{T}\mathcal{K}^*_{t+1}\tilde{\bf F}\delta_{\ell}^i\right)\vec{\gamma}^*_{t}\vec{\theta}_{t}\right].
\end{align}
\end{thm}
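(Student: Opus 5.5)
We argue by backward induction (dynamic programming) on the hybrid state $(x_t,\theta_t)$, equivalently the augmented state $v_t=\vec{\theta}_t\ltimes x_t$. The induction hypothesis at time $t+1$ is that the optimal expected cost-to-go is
\[
V_{t+1}(v_{t+1})=\tfrac12 v_{t+1}^\mathrm{T}\mathcal{K}^*_{t+1}v_{t+1}+c_{t+1}(\theta_{t+1}),
\]
where $\mathcal{K}^*_{t+1}\succeq\0$ is block structured. Because $v_{t+1}$ has a single nonzero block, $v_{t+1}^\mathrm{T}\mathcal{K}^*_{t+1}v_{t+1}=x_{t+1}^\mathrm{T}\mathcal{K}^{*,(j)}_{t+1}x_{t+1}$ with $j=\iota(\theta_{t+1})$. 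The term $c_{t+1}$ is a constant that depends only on the logical state and collects the accumulated noise contributions. At $t=T$ the hypothesis holds with $\mathcal{K}^*_T=\mathbf{Q}$ and $c_T=0$. This follows from $\vec\theta_T^\mathrm{T}\ltimes$-block structure: $v_T^\mathrm{T}\mathbf{Q}v_T=x_T^\mathrm{T}Q_{\iota(\theta_T)}x_T$. Similarly, $v_t^\mathrm{T}\mathbf{C}v_t$ reproduces the running state cost. For the control cost, write $\bar u_t:=\vec\theta_t\ltimes u_t$, so that $\bar u_t^\mathrm{T}\mathbf{D}\bar u_t=u_t^\mathrm{T}D_{\iota(\theta_t)}u_t$.

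For the inductive step, fix $v_t$ and $\vec\gamma_t\in\Delta_M$, substitute \eqref{eq6.3}, and take the conditional expectation over $\sigma_t$ and $w_t$, which are independent of each other and of the current state. Using $\mathbb{E}[\vec\sigma_t]=\sum_i p_i\delta_\ell^i$, together with the fact that $\vec\sigma_t$ is a basis vector so that the quadratic terms split as $\sum_ip_i(\cdot)|_{\vec\sigma_t=\delta_\ell^i}$, and using $\mathbb{E}w_t=0$ and $\mathbb{E}w_tw_t^\mathrm{T}=I_r$, the cross terms with $w_t$ vanish. The stage cost plus $\mathbb{E}V_{t+1}$ then equals
\[
\tfrac12\big[v_t^\mathrm{T}(\mathbf{C}+\mathcal{A}_\gamma)v_t+2\bar u_t^\mathrm{T}\mathcal{N}_\gamma v_t+\bar u_t^\mathrm{T}(\mathbf{D}+\mathcal{R}_\gamma)\bar u_t\big]+\tfrac12\mathrm{tr}(\cdots)+\mathbb{E}c_{t+1}.
\]
Here $\mathcal{A}_\gamma,\mathcal{N}_\gamma,\mathcal{R}_\gamma$ are the $p_i$-weighted sums appearing in the statement, namely $\vec\gamma_t^\mathrm{T}\sum_ip_i(\delta_\ell^i)^\mathrm{T}\tilde{\mathbf A}^\mathrm{T}\mathcal{K}^*_{t+1}\tilde{\mathbf A}\delta_\ell^i\vec\gamma_t$ and its analogues with $\tilde{\mathbf B}$. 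The trace term is exactly the one in \eqref{t6.1.0}. The identities $\tilde{\mathbf B}\vec\sigma\vec\gamma\vec\theta u=\tilde{\mathbf B}\vec\sigma\vec\gamma\bar u$ and $\vec\gamma^\mathrm{T}X\vec\gamma$ (meaning $(I\otimes\vec\gamma)^\mathrm{T}$-type block extraction) are verified with the swap rule and the power-reducing matrix of Lemma~\ref{lem2.0}.

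Since $\mathbf{D}\succ\0$ and $\mathcal{R}_\gamma\succeq\0$ (as $\mathcal{K}^*_{t+1}\succeq\0$), the matrix $\mathbf{D}+\mathcal{R}_\gamma$ is invertible. The expression is therefore strictly convex in $\bar u_t$, with unique minimizer $\bar u_t=\mathcal{G}_t(\vec\gamma_t)v_t$. Both $\mathbf{D}$ and $\mathcal{R}_\gamma$ are block diagonal with respect to the logical index, because $\tilde{\mathbf B}\delta_\ell^i\vec\gamma$ acts blockwise on $\vec\theta_t$. Hence $\mathcal{G}_t$ maps the $\theta$-block of $v_t$ into the same $\theta$-block, so the minimizer indeed has the form $\vec\theta_t\ltimes u_t$. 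Extracting $u_t$ gives $u_t=\mathbf{1}_N^\mathrm{T}\ltimes\mathcal{G}_t v_t$, which is \eqref{t6.1.1}; this settles item 2, uniqueness following from strict convexity. Substituting the minimizer and expanding gives the quadratic form $\frac12 v_t^\mathrm{T}\mathcal{K}_t(\vec\gamma_t)v_t$ with $\mathcal{K}_t$ exactly as displayed. The expansion is written without completing the square so that it matches the stated formula term by term, and it preserves $\mathcal{K}_t\succeq\0$ because the expression is a sum of PSD forms, which keeps the induction going.

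Next we minimize over the finite set $\Delta_M$ pointwise in $v_t$, which yields \eqref{t6.1.0}. A minimizer exists because the set is finite, and any minimizer is optimal (item 3). Measurable state feedback is immediate since there are finitely many regions, and combining the pieces gives existence of the optimal policy pair (item 1). Unrolling the recursion from $v_0$ and summing the trace constants along the optimal policy gives \eqref{t6.1.2}.

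The main obstacle is the inductive form of $V_t$. After minimizing over $\vec\gamma_t$, the value is a pointwise minimum of quadratics plus $\theta$-dependent constants, so it is not globally a single quadratic form. The interpretation "$\mathcal{K}^*_{t+1}$ is the matrix of the minimizing $\vec\gamma$" must therefore be handled carefully. The plan is to state and prove the induction for the value function along the realized optimal branch, taking $\mathcal{K}^*_{t+1}$ and the constant as evaluated at each realized $v_{t+1}$, so that the sum over $\sigma$ realizations is a sum of quadratics with possibly different matrices. We would then check that the $p_i$-weighted formula is consistent with this by letting $\mathcal{K}^*_{t+1}$ be a block matrix indexed by the realized successor logical state. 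The blockwise identities in the second step are routine but index-heavy.
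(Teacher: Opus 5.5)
The gap is in the induction step, and your proposed repair does not close it. Your single-stage computation is the same as the paper's proof: you condition on $v_t$, average over $\sigma_t$ and $w_t$, take the first-order condition in $u_t$, substitute, and minimize over the finite set $\Delta_M$. At $t=T-1$ this is exact, because $\mathcal{K}^*_T=\mathbf{Q}$ is a genuinely fixed matrix.

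\textbf{The induction step.} Already at time $T-1$, minimizing over $\vec\gamma_{T-1}$ gives $V_{T-1}(\delta_N^j\ltimes x)=\min_{\vec\gamma\in\Delta_M}\big[\tfrac12x^\mathrm{T}[\mathcal{K}_{T-1}(\vec\gamma)]_{jj}x+c_{T-1}(\vec\gamma,j)\big]$. The active piece depends on $x$, not only on $j$: two pieces with equal constants and $jj$-blocks $\mathrm{diag}(1,2)$ and $\mathrm{diag}(2,1)$ already switch with the direction of $x$. Given $(v_{T-2},u_{T-2},\sigma_{T-2})$, the state $x_{T-1}$ is Gaussian with a $u_{T-2}$-dependent mean. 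So which branch becomes active depends on the noise realization and on $u_{T-2}$ itself.

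Hence neither ``$\mathcal{K}^*_{T-1}$ evaluated at the realized $v_{T-1}$'' nor ``a block matrix indexed by the realized successor logical state'' is a matrix you may freeze while differentiating in $u_{T-2}$. The map $u_{T-2}\mapsto\mathbb{E}[V_{T-1}(v_{T-1})\mid v_{T-2},\vec\gamma_{T-2},u_{T-2}]$ is an average of minima of quadratics, not a quadratic, and in general $\mathbb{E}\min_{\vec\gamma}\neq\min_{\vec\gamma}\mathbb{E}$. In the deterministic case you could swap $\min_{u_t}$ with the minimum over future logical controls; here you cannot, because those future choices adapt to the random realization. For $t<T-1$ the Bellman-optimal $u_t$ need not even be linear in $v_t$.

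The paper's proof has the same hole: it writes out $t=T-1$ and asserts that the earlier steps follow ``analogously by backward induction''. What your argument does prove rigorously is the fixed-sequence version. For a fixed logical tail $\vec{\bm\gamma}_{t:T-1}$ the problem is a time-varying MJLS LQ problem, and your hypothesis holds exactly with $\mathcal{K}^*_{t+1}$ replaced by $\mathcal{K}_{t+1}(\vec{\bm\gamma}_{t+1:T-1})$; this gives item 2. Minimizing afterwards over the finitely many tails, as in Algorithm~\ref{alg2}, gives the best open-loop logical plan. That is only an upper bound on the closed-loop Bellman value asserted in items 1 and 3.

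\textbf{The block-diagonality claim.} Separately, your claim that $\mathbf{D}+\mathcal{R}_\gamma$ is block diagonal in the logical index is false. Column block $j$ of $\tilde{\mathbf{B}}\delta_\ell^i\vec\gamma$ is $\delta_N^{s_i(j)}\otimes B_j$, where $\delta_N^{s_i(j)}=L_i\vec\gamma\,\delta_N^j$. So the $(j,k)$ block of $\mathcal{R}_\gamma$ is $\sum_ip_iB_j^\mathrm{T}[\mathcal{K}^*_{t+1}]_{s_i(j)s_i(k)}B_k$. This is generally nonzero once some $L_i\vec\gamma$ sends two logical states to the same successor. The paper's own $L_2$ in Section~\ref{S4} does this: for $\gamma=1$ it sends states $1$ and $4$ both to state $2$.

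The unconstrained minimizer over $\bar u\in\mathbb{R}^{Nm}$ is then not of the form $\vec\theta_t\ltimes u_t$, and $\mathbf{1}_N^\mathrm{T}\ltimes\mathcal{G}_tv_t$ is not the optimal input. Take $\ell=1$, $N=2$, $n=m=1$, a logical control sending both states to one successor $s$ with $[\mathcal{K}^*_{t+1}]_{ss}=1$, $D_1=D_2=B_1=B_2=A_1=1$, and $\theta_t=1$. The displayed formula gives $-\tfrac23x_t$, whereas the true minimizer is $-\tfrac12x_t$.

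You must minimize over the subspace $\{\delta_N^j\otimes u\}$ directly, which gives $u_t^*=-(D_j+[\mathcal{R}_\gamma]_{jj})^{-1}[\mathcal{N}_\gamma]_{jj}x_t$. This agrees with the stated $\mathcal{G}_t$ only if $\mathcal{R}_\gamma$ and $\mathcal{N}_\gamma$ are replaced by their block-diagonal parts, or if every $L_i\vec\gamma$ is injective on $\Delta_N$.

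\textbf{The cost formula.} Since $\vec\theta_t$ is random for $t\ge1$, unrolling your recursion (which correctly carries $\mathbb{E}c_{t+1}$) yields an expectation of the trace terms, not the literal sum in \eqref{t6.1.2}.
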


\emph{Proof.}
See Appendix \ref{app3}.
\hfill $\Box$

\begin{rem}[Connection to Classical MJLS]
\label{rem:stochastic_mjls}
The stochastic result in Theorem~\ref{t6.1} generalizes the classical optimal control of MJLS as a special case in the absence of logical control. Specifically, the transition probability matrix of a homogeneous Markov chain can be decomposed into a convex combination of finitely many logical matrices, which serve as basis transition structures. By assigning the convex weights as the probability distribution of an exogenous switching signal $\sigma_t$, the logical state $\theta_t$ evolves as an uncontrolled Markov process.

In this special case, the logical evolution reduces to
$\vec{\theta}_{t+1} = \tilde{L}\vec{\sigma}_t \vec{\theta}_t$, with no logical control variable involved. Consequently, the augmented dynamics coincide with those of a standard MJLS under the lifted logical-state representation, and the Riccati-type recursion in Theorem~\ref{t6.1} recovers the classical MJLS optimal control equations. \hfill $\Box$
\end{rem}

\begin{rem}[Implementation via Offline Precomputation and Online Receding-Horizon]\label{rem:stochastic_implementation}
\textbf{Value Function Structure.}
From Theorem~\ref{t6.1}, the Bellman value function at time $t$
can be written as
\[
V_t(v_t)
= \frac{1}{2} v_t^\mathrm{T} \mathcal{K}_t^* v_t + c_t,
\]
where $\mathcal{K}_t^*$ is the optimal Riccati matrix
and $c_t$ collects the cumulative noise-induced cost arising from the
trace terms in \eqref{t6.1.0}.
Importantly, for any given current logical state \(\vec{\theta}_t\) and any fixed future logical control sequence \(\vec{\bm{\gamma}}_{t:T-1}\), the quantity \(c_t(\vec{\gamma}_t, \vec{\theta}_t)\) in \eqref{t6.1.0} represents the \textbf{expected} contribution of future process noise to the cost-to-go. This expectation, which is taken over the random switching signals $\vec{\bm{\sigma}}_{t:T-1}$ and process noise $\bm{w}_{t:T-1}$, is a \textbf{deterministic function} of \(\vec{\theta}_t\) and \(\vec{\bm{\gamma}}_{t:T-1}\) that can be computed offline via the backward recursion in Theorem~\ref{t6.1}.

\textbf{Offline Precomputation.}
For each admissible logical control sequence
$\vec{\bm{\gamma}}_{t:T-1}$, we precompute offline:
\begin{itemize}
    \item the Riccati matrices $\mathcal{K}_t(\vec{\bm{\gamma}}_{t:T-1})$
    and gains $\mathcal{G}_t(\vec{\bm{\gamma}}_{t:T-1})$
    via the backward recursion in Theorem~\ref{t6.1};
    \item the associated noise-induced cost constants
    $c_t(\vec{\bm{\gamma}}_{t:T-1})$, obtained by recursively accumulating
    the trace terms in \eqref{t6.1.0}, with $c_T = 0$.
\end{itemize}

\textbf{Online Receding-Horizon Execution.}
At runtime, given the observed augmented state
$v_t = \vec{\theta}_t \ltimes x_t$, the controller enumerates all
admissible future logical sequences $\vec{\bm{\gamma}}_{t:T-1}$ and
evaluates the total expected cost
\[
\frac{1}{2} v_t^\mathrm{T} \mathcal{K}_t(\vec{\bm{\gamma}}_{t:T-1}) v_t
+ c_t(\vec{\bm{\gamma}}_{t:T-1}),
\]
selecting the sequence that minimizes this quantity.
Only the first logical control $\vec{\gamma}_t^*$ is applied,
yielding a receding-horizon policy that adapts to stochastic realizations.

Algorithm~\ref{alg2} summarizes this online procedure.
\hfill $\Box$
\end{rem}

\begin{algorithm}[h]
\caption{Real-Time Co-Design under Stochastic Logical Dynamics\label{alg2}}
\hspace*{0.02in} {\bf Input:}
Initial augmented state $v_0$; horizon $T$; 
offline-precomputed matrices $\{\mathcal{K}_t(\vec{\bm{\gamma}}_{t:T-1}), \mathcal{G}_t(\vec{\bm{\gamma}}_{t:T-1})\}$ 
for all admissible logical sequences $\vec{\bm{\gamma}}_{0:T-1}$.\\
\hspace*{0.02in} {\bf Output:} Applied logical control sequence $\vec{\bm{\gamma}}^*_{\text{applied}} = (\vec{\gamma}_0^*, \dots, \vec{\gamma}_{T-1}^*)$.

\begin{algorithmic}[1]
\STATE \textbf{Online Receding-Horizon Execution:}
\FOR{$t = 0, 1, \dots, T-1$}
\STATE Observe current state $v_t$
    \STATE $J^* \leftarrow +\infty$
    \FOR{each admissible logical sequence $\vec{\bm{\gamma}}_{t:T-1} = (\vec{\gamma}_t, \dots, \vec{\gamma}_{T-1})$}
        \STATE Retrieve $\mathcal{K}_t(\vec{\bm{\gamma}}_{t:T-1})$ and $\mathcal{G}_t(\vec{\bm{\gamma}}_{t:T-1})$
        \STATE Compute $c_t(\vec{\bm{\gamma}}_{t:T-1})$ using the precomputed
noise-induced cost from the trace term in \eqref{t6.1.0}
        \STATE Compute predicted cost $J = \frac{1}{2} v_t^\mathrm{T} \mathcal{K}_t(\vec{\bm{\gamma}}_{t:T-1}) v_t + c_t$
        \IF{$J < J^*$}
            \STATE $J^* \leftarrow J$
            \STATE $\vec{\gamma}_t^* \leftarrow \vec{\gamma}_t$ \COMMENT{First element of the minimizing sequence}
            \STATE $\mathcal{G}_t^* \leftarrow \mathcal{G}_t(\vec{\bm{\gamma}}_{t:T-1})$ \COMMENT{Store corresponding gain}
        \ENDIF
    \ENDFOR
    \STATE Apply logical control $\vec{\gamma}_t^*$
    \STATE Compute continuous control $u_t^* = \mathbf{1}_N^\mathrm{T} \ltimes \mathcal{G}_t^* v_t$
\ENDFOR
\end{algorithmic}
\end{algorithm}

\section{An Illustrative Example\label{S4}}
This section presents a numerical example designed to highlight the necessity of explicitly modeling logical dynamics in hybrid co-design problems, and to demonstrate how the proposed framework enables logical decisions with memory and dynamic constraints to be optimally integrated into the overall performance index.
\begin{exa}
    \emph{Logical dynamics.}
The Boolean network has two state nodes $\theta_t = (\theta_t^1, \theta_t^2)$, $\theta_t^i \in \mathscr{D}_2$, and one control node $\gamma_t \in \mathscr{D}_2$.

In the deterministic case, the logical update is described by 
\begin{align}\label{exa1}
\begin{cases}
    \theta_{t+1}^1=\gamma_t\odot(\theta_{t}^1\odot \theta_{t}^2),\\
    \theta_{t+1}^2=\gamma_t\odot\theta_{t}^1,
\end{cases}
\end{align}
where $\odot$ denote logical XNOR (with the convention that $1$ represents TRUE and $2$ represents FALSE).

To illustrate how the structure matrix is derived from the logical functions, we enumerate the truth table for \eqref{exa1}. For each combination of $(\gamma_t, \theta_t^1, \theta_t^2)$, we compute the next state $(\theta_{t+1}^1, \theta_{t+1}^2)$ and map it to the corresponding canonical basis vector $\delta_4^i$. For example, when $(\gamma_t, \theta_t^1, \theta_t^2) = (1,1,1)$, we have $\theta_{t+1}^1 = 1$ and $\theta_{t+1}^2 = 1$, so the output is $(1,1) \leftrightarrow \delta_4^1$. Proceeding similarly for all eight input combinations yields the structure matrix
\[
L_1 = \left[\begin{smallmatrix}
  1 & 0 & 0 & 0 & 0 & 0 & 1 & 0 \\
  0 & 0 & 0 & 1 & 0 & 1 & 0 & 0 \\
  0 & 1 & 0 & 0 & 0 & 0 & 0 & 1 \\
  0 & 0 & 1 & 0 & 1 & 0 & 0 & 0
\end{smallmatrix}\right].\]

In the stochastic case, besides the above rule (denoted $f_1$), an additional update rule $f_2$ is introduced:
\begin{align}
\begin{cases}
    \theta_{t+1}^1=\neg\left(\gamma_t\wedge \neg\theta^1_t\wedge \theta^2_t\right),\\
    \theta_{t+1}^2=(\gamma_t\wedge \theta_{t}^1) \oplus \theta_{t}^2,
\end{cases}
\end{align}
where $\wedge$, $\vee$, $\neg$, and $\oplus$ denote logical AND, OR, NOT, and XOR, respectively. 
Following the same truth‑table enumeration (or, equivalently, applying the algebraic procedure of Lemma~\ref{lem2.3}), we obtain its structure matrix
\[
L_2 = \left[\begin{smallmatrix}
  0 & 1 & 0 & 0 & 1 & 0 & 1 & 0 \\
  1 & 0 & 0 & 1 & 0 & 1 & 0 & 1 \\
  0 & 0 & 1 & 0 & 0 & 0 & 0 & 0 \\
  0 & 0 & 0 & 0 & 0 & 0 & 0 & 0
\end{smallmatrix}\right].
\]
The overall structure matrix for the stochastic logical dynamics is therefore $\tilde{L}=\left[L_1,L_2\right]\in{\cal L}_{4\times 32}$. The active update function at each time step is randomly selected from $\{f_1,f_2\}$ by a selector $\sigma_t$ with probabilities $\mathbb{P}(\sigma_t=i)=p_i$, where $(p_1,p_2)=(0.7,0.3)$.

We emphasize that the logical state $\theta_t$ evolves according to its own internal dynamics and cannot be arbitrarily assigned at each time step. This captures a common feature in practical hybrid systems, where mode availability and transitions are constrained by memory, hysteresis, or protocol rules.

\emph{Switched linear dynamics.}
The continuous subsystem is a switched linear system with four modes, where $x_t \in \mathbb{R}^3$, $u_t \in \mathbb{R}^2$, $w_t \in \mathbb{R}^3$. For each mode $i \in [1,4]$, $(A_i, B_i, F_i)$ and $(C_i, D_i, Q_i)$ are specified as follows ($F_i$ is only used in the stochastic setting):
  \begin{align*}
    A_1=\left[\begin{smallmatrix}
        0.8&  0.1& -0.05\\
      0.05& 0.9&  0.1\\
      -0.1& 0.05& 0.85
    \end{smallmatrix}\right],
    B_1=\left[\begin{smallmatrix}
        0.5& 0\\
      0.1& 0.3\\
      0&   0.2
    \end{smallmatrix}\right],
    F_1=\left[\begin{smallmatrix}
        0.1&0&0\\
        0&0.1&0\\
        0&0&0.1
    \end{smallmatrix}\right],\\
C_1=\left[\begin{smallmatrix}
    1.0&  0& 0\\
      0& 0.8& 0\\
      0& 0& 1.2
    \end{smallmatrix}\right],
    D_1=\left[\begin{smallmatrix}
        0.5&0\\
        0&0.5\\
    \end{smallmatrix}\right],
    Q_1=\left[\begin{smallmatrix}
        2.0&0&0\\
        0&1.5&0\\
        0&0&2.5
    \end{smallmatrix}\right].
\end{align*}
    \begin{align*}
    A_2=\left[\begin{smallmatrix}
        0.7&  0.4& 0\\
      -0.4& 0.7&  0.1\\
      0.1& -0.1& 0.8
    \end{smallmatrix}\right],
    B_2=\left[\begin{smallmatrix}
        0.3& 0.2\\
      0.1& 0.4\\
      0.2&   0.1
    \end{smallmatrix}\right],
    F_2=\left[\begin{smallmatrix}
        0.1&0&0\\
        0&0.1&0\\
        0&0&0.1
    \end{smallmatrix}\right],\\
C_2=\left[\begin{smallmatrix}
    1.2&  0& 0\\
      0& 1.0& 0\\
      0& 0& 0.9
    \end{smallmatrix}\right],
    D_2=\left[\begin{smallmatrix}
        0.6&0\\
        0&0.4\\
    \end{smallmatrix}\right],
   Q_2= \left[\begin{smallmatrix}
        1.8&0&0\\
        0&2.0&0\\
        0&0&1.5
    \end{smallmatrix}\right].
\end{align*}
      \begin{align*}
    A_3=\left[\begin{smallmatrix}
        0.9&  0.2& 0.1\\
      0.1& 0.85&  0.15\\
      0.05& 0.1& 0.95
    \end{smallmatrix}\right],
    B_3=\left[\begin{smallmatrix}
        0.8& 0.1\\
      0.2& 0.7\\
      0.1&   0.3
    \end{smallmatrix}\right],
    F_3=\left[\begin{smallmatrix}
        0.1&0&0\\
        0&0.1&0\\
        0&0&0.1
    \end{smallmatrix}\right],\\
C_3=\left[\begin{smallmatrix}
    0.7&  0& 0\\
      0& 1.1& 0\\
      0& 0& 0.8
    \end{smallmatrix}\right],
    D_3=\left[\begin{smallmatrix}
        0.3&0\\
        0&0.7\\
    \end{smallmatrix}\right],
    Q_3=\left[\begin{smallmatrix}
        2.2&0&0\\
        0&1.8&0\\
        0&0&2.0
    \end{smallmatrix}\right].
\end{align*}
        \begin{align*}
    A_4=\left[\begin{smallmatrix}
        0.95&  0.2& -0.1\\
      -0.1& 0.99&  0.2\\
      0.15& -0.1& 0.93
    \end{smallmatrix}\right],
    B_4=\left[\begin{smallmatrix}
        0.4& 0.3\\
      0.2& 0.5\\
      0.3& 0.2
    \end{smallmatrix}\right],
    F_4=\left[\begin{smallmatrix}
        0.1&0&0\\
        0&0.1&0\\
        0&0&0.1
    \end{smallmatrix}\right],\\
C_4=\left[\begin{smallmatrix}
    1.1&  0& 0\\
      0& 0.9& 0\\
      0& 0& 1.0
    \end{smallmatrix}\right],
D_4=\left[\begin{smallmatrix}
        0.4&0\\
        0&0.6\\
    \end{smallmatrix}\right],
    Q_4=\left[\begin{smallmatrix}
        1.5&0&0\\
        0&2.2&0\\
        0&0&1.8
    \end{smallmatrix}\right].
\end{align*}

Using~\eqref{eq5.3}, we construct $\mathbf{A} \in \mathbb{R}^{12 \times 24}$, $\mathbf{B} \in \mathbb{R}^{12 \times 16}$ for the deterministic case. For the stochastic case, $\tilde{\mathbf{A}} \in \mathbb{R}^{12 \times 48}$, $\tilde{\mathbf{B}} \in \mathbb{R}^{12 \times 32}$, and $\tilde{\mathbf{F}} \in \mathbb{R}^{12 \times 48}$ are derived from~\eqref{eq6.3}.

\emph{Implementation.}
All Riccati-type matrices corresponding to admissible logical control sequences are computed offline using Algorithm \ref{alg1}. For $T=3$, this results in $2^t$ candidate matrices at stage $t$, which are stored and reused during online evaluation.

For the deterministic case, given an initial state $v_0$, the optimal logical control sequence is obtained by solving:
\[
\vec{\bm{\gamma}}^*(v_0) = (\vec{\gamma}_0^*, \vec{\gamma}_1^*, \vec{\gamma}_2^*) = \arg\min_{\vec{\bm{\gamma}} \in \mathcal{S}_0} \frac{1}{2} v_0^\mathrm{T} K_0(\vec{\bm{\gamma}}) v_0,
\]
where $\vec{\bm{\gamma}}$ iterates over all admissible logical control sequences of length $3$, and $\mathcal{S}_0$ is the set of all such sequences as constructed in Algorithm~\ref{alg1}.
  
Once the logical control sequence is determined, the corresponding Riccati matrices are uniquely specified, and the optimal control problem is solved.

For the stochastic case, we apply Algorithm~\ref{alg2}. At each time step, based on $v_t$, the controller selects the logical control $\vec{\gamma}_t^*$ and applies the feedback law $u_t^* = \tilde{\mathbf{G}}_t(\vec{\gamma}_t^*) v_t$.

\emph{Results.}
To validate the theoretical results, we present Monte Carlo simulations evaluating the performance of the proposed state-feedback implementation under stochastic dynamics. 
In the deterministic case, Theorem~\ref{t5.1} guarantees that Algorithm~\ref{alg1} attains the exact theoretical optimum \(J = J^*\), and therefore no statistical evaluation is required. 
In contrast, the stochastic setting introduces intrinsic variability due to random logical evolution and process disturbances, which makes a statistical assessment both necessary and informative.

In each Monte Carlo trial, \(x_0\) is uniformly sampled from \([-10,10]^3\), and \(\theta_0\) is drawn uniformly from $\mathscr{D}_4$. For every sample trajectory, we compute the realized cost \(J\) and record the performance ratio \(J/J^*\) against the expected cost \(J^*\).

Fig.~\ref{fig:ratio_mc} shows the empirical distribution of the normalized cost ratio $J/J^*$ over $1000$ Monte Carlo runs with randomly sampled initial conditions. The ratios concentrate tightly around $1$. This concentration indicates that the Riccati-based value function provides a tight characterization of the expected cost, despite stochastic logical transitions and process noise.

\begin{figure}[!h]
    \centering
    \includegraphics[width=.8\textwidth]{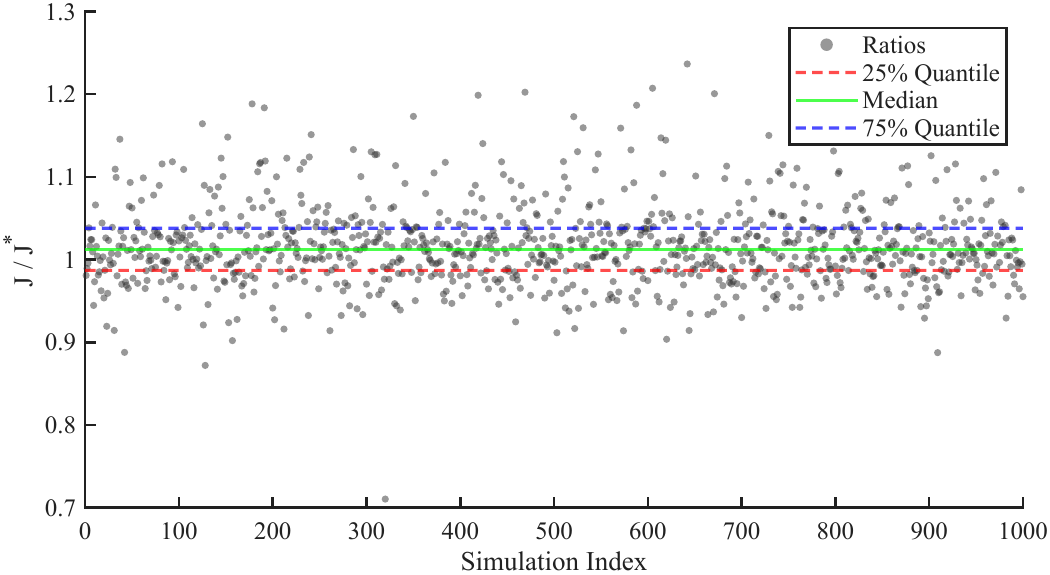}
    \caption{Monte Carlo simulation of the performance ratio $J/J^*$.}
    \label{fig:ratio_mc}
\end{figure}

Fig.~\ref{fig:state_mc} illustrates the evolution of the continuous state trajectories over all $1000$ Monte Carlo trials.  Despite stochastic switching and additive disturbances, all trajectories remain bounded and are driven toward a neighborhood of the origin throughout the simulation horizon.  This empirical behavior indicates robust state regulation under the proposed state-feedback implementation. 
The mean trajectory of each state component is highlighted to illustrate the typical system behavior. This behavior is consistent with the regulation properties implied by the proposed co-design framework.

\begin{figure}[!h]
    \centering
    \includegraphics[width=.8\textwidth]{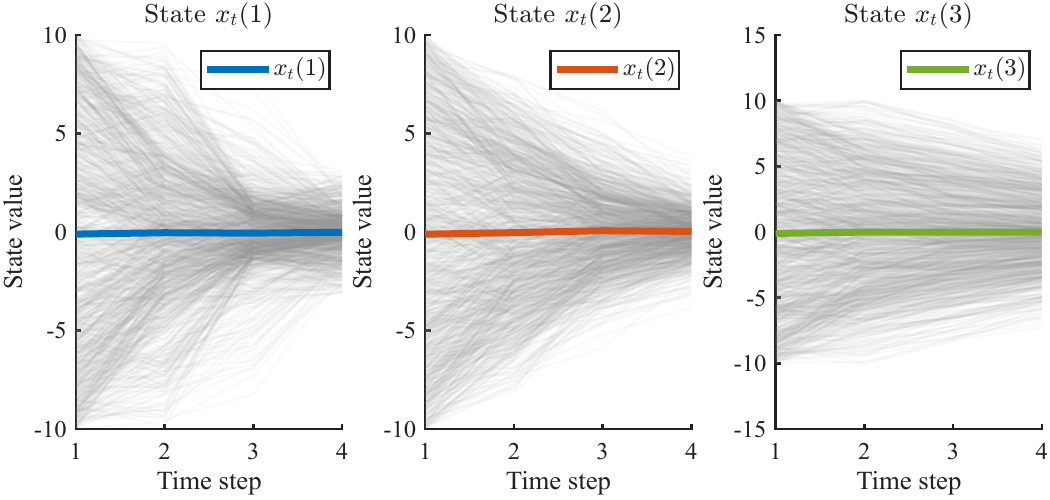}
    \caption{Continuous state trajectories over all Monte Carlo trials.}
    \label{fig:state_mc}
\end{figure}

Notably, such performance guarantees would not be attainable if the logical variables were treated as static mode selectors or instantaneous decision variables, as the implicit memory and transition constraints of the logical layer would be ignored.
\end{exa}

\section{Conclusion\label{S5}}
This paper developed a unified optimal control framework for switched linear systems with logic-driven switching dynamics, enabling the joint co-design of logical and continuous controls. By employing the semi-tensor product to obtain an algebraic state-space representation, the proposed formulation allows classical linear-quadratic and dynamic programming techniques to be systematically extended to this hybrid setting, yielding Riccati-type solutions for both deterministic and stochastic cases.

The framework exhibits notable generality through two distinct degenerations: in the deterministic case, it recovers the classical optimal control of switched systems with freely selectable mode sequences when the logical dynamics reduces to a direct assignment; in the stochastic case, it reduces to the standard optimal control of Markov jump linear systems when the logical control is inactive. These special cases establish a direct connection to the foundational literature on switched and stochastic systems.

\appendix
\section{Auxiliary Lemmas\label{app1}}
Before presenting the proofs, we first establish the STP-based matrix calculus rules needed in the derivations. 

\begin{lem}\label{lem2.1}
Suppose $A \in \mathbb{R}^{p \times q}$ is independent of $x=\left[x_1,x_2,\cdots,x_n\right]^\mathrm{T} \in \mathbb{R}^n$. Let $l$ be the least common multiple of $q$ and $n$. Then
\begin{align}
    \frac{\partial (A \ltimes x)}{\partial x} = A \otimes I_{l/q} \in \mathbb{R}^{pl/q \times l}.
\end{align}
\end{lem}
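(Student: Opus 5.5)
The plan is to expand the semi-tensor product through Definition \ref{da.1} and then differentiate entrywise. The result is not an $n$-column Jacobian: the claimed size is $pl/q\times l$, while $A\ltimes x$ is in general a $pl/q\times (l/n)$ matrix. So the derivative must be read as the horizontal concatenation of the partial derivatives,
\[
\frac{\partial Y}{\partial x}:=\left[\frac{\partial Y}{\partial x_1},\ \frac{\partial Y}{\partial x_2},\ \cdots,\ \frac{\partial Y}{\partial x_n}\right],
\]
which for $Y\in\mathbb{R}^{pl/q\times l/n}$ has exactly $pl/q$ rows and $n\cdot (l/n)=l$ columns. I will state this convention at the start, since it is the only reading compatible with the dimensions in the statement.

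First I write $A\ltimes x=(A\otimes I_{l/q})(x\otimes I_{l/n})$. The left factor is constant in $x$, and the right factor is linear in $x$. Writing $x=\sum_{i=1}^n x_i\,\delta_n^i$ gives $x\otimes I_{l/n}=\sum_i x_i(\delta_n^i\otimes I_{l/n})$. Hence
\[
\frac{\partial (A\ltimes x)}{\partial x_i}=(A\otimes I_{l/q})(\delta_n^i\otimes I_{l/n}),\qquad i=1,\dots,n.
\]

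Next I concatenate these blocks horizontally and factor out the common left factor. This gives
\[
\frac{\partial (A\ltimes x)}{\partial x}=(A\otimes I_{l/q})\left[\delta_n^1\otimes I_{l/n},\ \cdots,\ \delta_n^n\otimes I_{l/n}\right].
\]
The bracketed block matrix equals $[\delta_n^1,\cdots,\delta_n^n]\otimes I_{l/n}=I_n\otimes I_{l/n}=I_l$, and the claimed formula follows. A final dimension check closes the argument: $A\otimes I_{l/q}$ is $pl/q\times l$, which matches the statement. This also confirms that the factor $I_l$ is compatible on the right.

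The main obstacle is not the computation, which is routine, but pinning down the derivative convention so that the statement is meaningful when $n\nmid q$. If the paper intends a different stacking, for instance vertical or a $\mathrm{vec}$-based one, the same computation applies but the factorization step must be redone. The identity $[\delta_n^1\otimes I_{l/n},\ \cdots,\ \delta_n^n\otimes I_{l/n}]=I_l$ is the crux and should be verified explicitly under whichever convention is adopted.
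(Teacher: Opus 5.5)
Your proposal is correct and is essentially the paper's argument: expand $A\ltimes x=(A\otimes I_{l/q})(x\otimes I_{l/n})$ via Definition~\ref{da.1}, note that it is linear in $x$ so that $\partial(A\ltimes x)/\partial x_i$ is the $i$-th column block (of width $l/n$) of $A\otimes I_{l/q}$, and concatenate these blocks horizontally, which is exactly the derivative convention the paper uses. Your identity $[\delta_n^1\otimes I_{l/n},\ \cdots,\ \delta_n^n\otimes I_{l/n}]=I_n\otimes I_{l/n}=I_l$ is just a more explicit way of writing the paper's block partition $A\otimes I_{l/q}=[\hat{A}_1\ \cdots\ \hat{A}_n]$.
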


\emph{Proof.}
    By Definition~\ref{da.1}, we have
\begin{align*}
    A \ltimes x = (A \otimes I_{l/q})(x \otimes I_{l/n})
    := \left[ \hat{A}_1 \; \hat{A}_2 \; \cdots \; \hat{A}_n \right]
    \left[
        \begin{smallmatrix}
            x_1 I_{l/n} \\
            x_2 I_{l/n} \\
            \vdots \\
            x_n I_{l/n}
        \end{smallmatrix}
    \right],
\end{align*}
where $\hat{A}_i = \hat{A} \delta_n^i$, $\hat{A} := A \otimes I_{l/q}$.

Thus, the result can be written as $
    \sum_{i=1}^n x_i \hat{A}_i,$
and it follows that $\hat{A}_i$ is the only term related to $x_i$. Therefore,
\begin{align*}
    \frac{\partial (A \ltimes x)}{\partial x}= \left[ \frac{\partial (A \ltimes x)}{\partial x_1} \; \frac{\partial (A \ltimes x)}{\partial x_2} \; \cdots \; \frac{\partial (A \ltimes x)}{\partial x_n} \right]= \left[ \hat{A}_1 \; \hat{A}_2 \; \cdots \; \hat{A}_n \right] = A \otimes I_{l/q}.
\end{align*}
The proof is thus completed.
\hfill $\Box$

\begin{lem}\label{lem2.2}
Suppose $A \in \mathbb{R}^{p \times q}$ is independent of $x \in \mathbb{R}^n$, $l = \mathrm{lcm}(q, n)$, and $t = \mathrm{lcm}(pl/q, n)$. Then
\begin{align}\nonumber
    \frac{\partial (x^\mathrm{T} \ltimes A \ltimes x)}{\partial x}
    =\left[ (\delta_n^1)^\mathrm{T} A x  (\delta_n^2)^\mathrm{T} A x \cdots (\delta_n^n)^\mathrm{T} A x \right]+ x^\mathrm{T} \hat{A} \in \mathbb{R}^{t/n \times tq/p},
\end{align}
where $\hat{A} = A \otimes I_{l/q}$.
\end{lem}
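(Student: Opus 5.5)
The plan is to treat $x^\mathrm{T}\ltimes A\ltimes x$ as a semi-tensor product of the two $x$-dependent factors $x^\mathrm{T}$ and $A\ltimes x$, differentiate entrywise in each coordinate $x_i$ with a product rule, and then assemble the $n$ partial derivatives column-block-wise, in the same layout convention used in Lemma~\ref{lem2.1}.

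\textbf{Step 1: bilinearity of the STP.} By Definition~\ref{da.1}, $X\ltimes Y=(X\otimes I_{a})(Y\otimes I_{b})$, where $a,b$ depend only on the dimensions of $X$ and $Y$. Hence $(X,Y)\mapsto X\ltimes Y$ is bilinear for fixed sizes. Moreover, $x^\mathrm{T}$ and $A\ltimes x$ have sizes $1\times n$ and $pl/q\times l$ independent of the value of $x$. So the ordinary product rule applies:
\[
\frac{\partial (x^\mathrm{T}\ltimes (A\ltimes x))}{\partial x_i}
=\Big(\frac{\partial x^\mathrm{T}}{\partial x_i}\Big)\ltimes (A\ltimes x)+x^\mathrm{T}\ltimes \frac{\partial (A\ltimes x)}{\partial x_i}.
\]

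\textbf{Step 2: evaluate the two terms.} The first factor is $(\delta_n^i)^\mathrm{T}$, so the first term is $(\delta_n^i)^\mathrm{T}Ax$, understood as an STP. For the second term, the proof of Lemma~\ref{lem2.1} shows $A\ltimes x=\sum_{j}x_j\hat A_j$, where $\hat A_j$ is the $j$-th column block of $\hat A=A\otimes I_{l/q}$. Hence $\partial(A\ltimes x)/\partial x_i=\hat A_i$, and the second term is $x^\mathrm{T}\ltimes \hat A_i$. Both terms have the same size $t/n\times tl/(npl/q)\cdot$(appropriate factor). I will check this by writing $t=\mathrm{lcm}(pl/q,n)$ explicitly, so that each term is the $i$-th block of a matrix of size $t/n\times tq/p$.

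\textbf{Step 3: concatenate over $i$.} Stacking $i=1,\dots,n$ horizontally, the first terms give $\big[(\delta_n^1)^\mathrm{T}Ax\ \cdots\ (\delta_n^n)^\mathrm{T}Ax\big]$. For the second terms I need
\[
\big[x^\mathrm{T}\ltimes\hat A_1\ \cdots\ x^\mathrm{T}\ltimes \hat A_n\big]=x^\mathrm{T}\ltimes \hat A.
\]
This holds because $x^\mathrm{T}\ltimes\hat A=(x^\mathrm{T}\otimes I_{t/n})(\hat A\otimes I_{s})$, and right-Kronecker with an identity preserves the partition into $n$ column blocks: $\hat A\otimes I_s=[\hat A_1\otimes I_s\ \cdots\ \hat A_n\otimes I_s]$. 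Moreover, each $\hat A_i$ has the same row count as $\hat A$, so the same $s$ is used for each block.

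\textbf{Main obstacle.} The delicate step is the dimension bookkeeping in Steps 2--3. I must verify that the STP of $x^\mathrm{T}$ with each block $\hat A_i$ uses the same lcm as with the full $\hat A$; this is true since only the row count $pl/q$ enters. I must also verify that $(\delta_n^i)^\mathrm{T}\ltimes(A\ltimes x)$ has the matching block width, which fixes the stated total size $t/n\times tq/p$. I would settle this by computing all padding exponents explicitly from $l$ and $t$ before assembling.
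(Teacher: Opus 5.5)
Your proposal is correct and follows essentially the same route as the paper. The paper writes $x^\mathrm{T}\ltimes A\ltimes x=\big(\sum_i x_i(\delta_n^i)^\mathrm{T}\otimes I_{t/n}\big)\big[(\sum_i x_i\hat A_i)\otimes I_{tq/pl}\big]$ explicitly, applies the product rule in each $x_i$ to obtain $(\delta_n^i)^\mathrm{T}Ax+x^\mathrm{T}\hat A_i$, and then concatenates over $i$.

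Your check that $[x^\mathrm{T}\ltimes\hat A_1\ \cdots\ x^\mathrm{T}\ltimes\hat A_n]=x^\mathrm{T}\ltimes\hat A$ fills in the assembly step that the paper leaves implicit. The block size you left as a placeholder is simply $t/n\times tq/(pn)$, because $tl/(n\cdot pl/q)=tq/(pn)$.
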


\emph{Proof.}
\begin{align*}
    x^\mathrm{T} \ltimes A \ltimes x
    &= \left( \sum_{i=1}^n x_i (\delta_n^i)^\mathrm{T} \otimes I_{t/n} \right) \left[ \left( \sum_{i=1}^n x_i \hat{A}_i \right) \otimes I_{tq/pl} \right],
\end{align*}
where $\hat{A}_i = \hat{A} \delta_n^i$ with $\hat{A}_i \in \mathbb{R}^{pl/q \times l/n}$, $i \in [1, n]$.
Taking the derivative with respect to $x_i$ yields
\begin{align*}
\left[ (\delta_n^i)^\mathrm{T} \otimes I_{t/n} \right] \left[ \left( \sum_{j=1}^n x_j \hat{A}_j \right) \otimes I_{tq/pl} \right]+ \left[ \sum_{j=1}^n x_j (\delta_n^j)^\mathrm{T} \otimes I_{t/n} \right] \left( \hat{A}_i \otimes I_{tq/pl} \right)= (\delta_n^i)^\mathrm{T} A x + x^\mathrm{T} \hat{A}_i.
\end{align*}
The result can be then obtained.
\hfill $\Box$

\section{Proof of Theorem \ref{t5.1}\label{app2}}
\textbf{Proof Structure Overview:} The proof follows the standard DP principle, executed backward in time over the entire augmented state space $\Delta_N \times \mathbb{R}^n$. At each stage $t$, the optimization proceeds in two nested steps for \textbf{every possible state} $v_t \in \Delta_N \times \mathbb{R}^n$:

\begin{enumerate}
    \item \textbf{Inner minimization (continuous control):} For each admissible logical control $\vec{\gamma}_t \in \Delta_M$, we first minimize the cost-to-go with respect to the continuous control $u_t$, obtaining a conditional optimal continuous control law $u_t^*(\vec{\gamma}_t, v_t)$.
    \item \textbf{Outer minimization (logical control):} We then minimize the resulting conditional value function with respect to $\vec{\gamma}_t$, yielding the optimal logical control $\vec{\gamma}_t^*(v_t)$.
\end{enumerate}

This two-step minimization is performed for all $v_t \in \Delta_N \times \mathbb{R}^n$, ensuring that the resulting policy is defined over the entire state space, not merely along a specific trajectory. The key insight enabling tractability is that, for any fixed $\vec{\gamma}_t$, the optimal continuous control gain $G_t(\vec{\gamma}_t)$ and the value function matrix $K_t(\vec{\gamma}_t)$ are computed independently of the specific state value $v_t$.

\emph{Proof.} We first establish the existence and uniqueness properties.

\textbf{Existence:} The optimization problem consists of minimizing the quadratic cost \eqref{cf} subject to the augmented dynamics \eqref{eq5.3} over a finite horizon $T$. 
The decision variables are the logical control sequence $\vec{\bm{\gamma}} = \{\vec{\gamma}_0, \dots, \vec{\gamma}_{T-1}\}$ with each $\vec{\gamma}_t \in \Delta_M$ (a finite set) and the continuous control sequence $\bm{u} = \{u_0, \dots, u_{T-1}\}$ with each $u_t \in \mathbb{R}^m$. 
For any fixed logical sequence $\vec{\gamma}$, the continuous subproblem is a standard linear-quadratic regulator (LQR) problem with time-varying matrices, which admits a unique optimal solution when $D_i \succ 0$ and $Q_i \succ 0$. 
Since there are only finitely many possible logical sequences ($M^T$), the global minimum over both $\vec{\bm{\gamma}}$ and $\bm{u}$ is attained by selecting the best among these finite LQR solutions.

\textbf{Uniqueness:}
\begin{enumerate}
    \item For a fixed logical sequence $\vec{\bm{\gamma}}$, the continuous control law $u_t^*(\vec{\gamma})$ is uniquely determined by the Riccati recursion, as the LQR problem has a strictly convex cost in $\mathbf{u}$.
    \item The optimal logical sequence may not be unique; multiple sequences can yield the same minimal cost. However, the \emph{value function} (the minimal achievable cost from any state) is unique. The feedback law $\vec{\gamma}_t^*(v_t)$ defined in the theorem selects \emph{any} minimizer at each state, and all such choices yield the same optimal cost.
\end{enumerate}

We now derive the recursive equations that characterize the optimal policy. Define the value function at terminal time $T$ as
\begin{align*}
  V_T(x_T,\vec{\theta}_T):=\frac{1}{2}x^\mathrm{T}_TQ_{\iota(\theta_T)}x_T=\frac{1}{2}v^\mathrm{T}_T{\bf Q}v_T.
\end{align*}
Here, the value function $V_t$ is defined for all possible augmented states $v_t$ in the augmented state space $\Delta_N\times \mathbb{R}^n$. The dynamic programming recursion will derive a feedback law that is valid for any such state.

Now we analyze the value function of time $T-1$:
\begin{align*}
    V_{T-1}=\frac{1}{2}&{\Big[}v_{T-1}^\mathrm{T}{\bf C} v_{T-1}+u_{T-1}^\mathrm{T}\left(\vec{\theta}_{T-1}^\mathrm{T}{\bf D}\vec{\theta}_{T-1}\right) u_{T-1}\\
    &+\left({\bf A}\vec{\gamma}_{T-1}v_{T-1}+{\bf B}\vec{\gamma}_{T-1}\vec{\theta}_{T-1}u_{T-1}\right)^\mathrm{T}{\bf Q}\left({\bf A}\vec{\gamma}_{T-1}v_{T-1}+{\bf B}\vec{\gamma}_{T-1}\vec{\theta}_{T-1}u_{T-1}\right){\Big]}.
\end{align*}
For any fixed logical control $\vec{\gamma}_{T-1}$, the minimization over the continuous input $u_{T-1}$ is a quadratic problem. Thus, for a given $\vec{\gamma}_{T-1}$, we take the derivative of the value function with respect to $u_{T-1}$ (using Lemmas \ref{lem2.1} and \ref{lem2.2}):
\begin{align*}
    \frac{\partial V_{T-1}}{\partial u_{T-1}}&=u_{T-1}^\mathrm{T}\left[\vec{\theta}_{T-1}^\mathrm{T}\left({\bf D}+\vec{\gamma}_{T-1}^\mathrm{T}{\bf B}^\mathrm{T}{\bf Q}{\bf B}\vec{\gamma}_{T-1}\right)\vec{\theta}_{T-1}\right]+v^\mathrm{T}_{T-1}\left(\vec{\gamma}_{T-1}^\mathrm{T}{\bf A}^\mathrm{T}{\bf Q}{\bf B}\vec{\gamma}_{T-1}\right)\vec{\theta}_{T-1}\in\mathbb{R}^{1\times m},\\
    \frac{\partial^2 V_{T-1}}{\partial u^2_{T-1}}&=\vec{\theta}_{T-1}^\mathrm{T}\left({\bf D}+\vec{\gamma}_{T-1}^\mathrm{T}{\bf B}^\mathrm{T}{\bf Q}{\bf B}\vec{\gamma}_{T-1} \right)\vec{\theta}_{T-1}\succ {\bf 0}_{m\times m},
\end{align*}
confirming convexity in $u_{T-1}$.
Setting this gradient to zero yields the optimal continuous control law conditional on $\vec{\gamma}_{T-1}$:
\begin{align*}
    \vec{\theta}_{T-1}u^*_{T-1}({\vec{\gamma}_{T-1}})=-\left({\bf D}+\vec{\gamma}_{T-1}^\mathrm{T}{\bf B}^\mathrm{T}{\bf Q}{\bf B}\vec{\gamma}_{T-1} \right)^{-1}\left(\vec{\gamma}_{T-1}^\mathrm{T}{\bf B}^\mathrm{T}{\bf Q}{\bf A}\vec{\gamma}_{T-1}\right)v_{T-1}
    :=G_{T-1}({\vec{\gamma}_{T-1}})v_{T-1}.
    \end{align*} Note that $\vec{\theta}_{T-1}u^*_{T-1}({\vec{\gamma}_{T-1}})$ represents a stacked vector containing $u^*_{T-1}({\vec{\gamma}_{T-1}})$ for each possible mode. Multiplying by $\mathbf{1}_N^\mathrm{T}$ extracts $u^*_{T-1}({\vec{\gamma}_{T-1}})$ from the stacked vector, since $\vec{\theta}_{T-1}$ is a canonical basis vector.
\begin{align*}
    u^*_{T-1}({\vec{\gamma}_{T-1}})=\mathbf{1}_N^\mathrm{T}\ltimes G_{T-1}({\vec{\gamma}_{T-1}})v_{T-1}:={\bf G}_{T-1}({\vec{\gamma}_{T-1}})v_{T-1}.
\end{align*}
Substituting this conditional control law back yields a value function parameterized by $\vec{\gamma}_{T-1}$:
\begin{align*}
    V^*_{T-1}(v_{T-1},\vec{\gamma}_{T-1})=&\frac{1}{2}v_{T-1}^\mathrm{T}\Bigg[{\bf C}+G_{T-1}^\mathrm{T}({\vec{\gamma}_{T-1}}){\bf D}G_{T-1}({\vec{\gamma}_{T-1}})+\left({\bf A}+{\bf B}\vec{\gamma}_{T-1}G_{T-1}({\vec{\gamma}_{T-1}})\right)^\mathrm{T}{\bf Q}\\&\times\left({\bf A}+{\bf B}\vec{\gamma}_{T-1}G_{T-1}({\vec{\gamma}_{T-1}})\right)\Bigg]v_{T-1}\\
    :=&\frac{1}{2}v_{T-1}^\mathrm{T}K_{T-1}({\vec{\gamma}_{T-1}})v_{T-1}.
\end{align*}
This results in a cost-to-go function parameterized by $\vec{\gamma}_{T-1}$. It is crucial to note that the matrix 
 $K_{T-1}({\vec{\gamma}_{T-1}})$ is independent of the specific value of $v_{T-1}$. It is a function only of the logical control $\vec{\gamma}_{T-1}$ and the system matrices. Therefore, the above expression represents the cost-to-go from any given state $v_{T-1}$ if logical control $\vec{\gamma}_{T-1}$ is applied at this step.

According to the principle of dynamic programming, the optimal logical control at time $T-1$ should be chosen as a feedback law that maps any state $v_{T-1}$ to a logical decision. Since $K_{T-1}(\vec{\gamma})$ is a symmetric positive definite matrix for each $\vec{\gamma}\in\Delta_M$, the optimal choice for a given $v_{T-1}$ is simply the $\vec{\gamma}$ that minimizes the above quadratic form:
\begin{align*}
    \vec{\gamma}_{T-1}^*:=\arg\min\limits_{\vec{\gamma_{T-1}}\in\Delta_{M}} \frac{1}{2}v_{T-1}^\mathrm{T}K_{T-1}(\vec{\gamma}_{T-1})v_{T-1},
\end{align*}
and denote $K_{T-1}^*:=K_{T-1}({\vec{\gamma}^*_{T-1}})$.

Iterating this recursion backward provides the general solution for all $t = T-1, T-2, \dots, 0$.

The above recursion yields a set of matrices $\{K^*_t\}$ and control gains $\{G^*_t\}$, which together define the optimal feedback policy for all possible states $v_t$, as stated in Theorem \ref{t5.1}.
\hfill $\Box$

\section{Proof of Theorem \ref{t6.1}\label{app3}}
\emph{Proof.} \textbf{Existence and Uniqueness (Stochastic Case):} 
The stochastic co-design problem minimizes the expected quadratic cost \eqref{cfs} subject to \eqref{eq6.3}. 
For any fixed logical control sequence $\vec{\bm{\gamma}}$, the continuous subproblem becomes a finite-horizon linear-quadratic-Gaussian (LQG) problem with Markovian switching. 
Under the conditions $D_i \succ 0$ and $Q_i \succ 0$, this LQG problem is strictly convex in $\bm{u}$, guaranteeing a unique optimal continuous control law (in the sense of almost-sure equivalence) given by the generalized Riccati recursion. 
Since the set of logical sequences is finite, an optimal pair $(\vec{\bm{\gamma}}^*, \bm{u}^*)$ exists by comparing the finite number of expected costs from the corresponding LQG solutions.

The uniqueness considerations are analogous to the deterministic case: the continuous law is unique for a given $\vec{\gamma}\in\Delta_M$, while the logical law $\vec{\gamma}_t^*(v_t)$ selects any minimizer of the expected cost-to-go, with the minimal expected value function being unique.

The dynamic programming recursion for the stochastic case proceeds as follows.

At time $T$, ${\cal V}_T=\frac{1}{2}v_T^\mathrm{T}{\bf Q}v_T$, then given $v_{T-1}$,
\begin{align}\label{eq.app0}
{\cal V}_{T-1}=&\frac{1}{2}\left[v^\mathrm{T}_{T-1}{\bf C}v_{T-1}+u_{T-1}^\mathrm{T}\left(\vec{\theta}_{T-1}^\mathrm{T}{\bf D}\vec{\theta}_{T-1}\right)u_{T-1}\right]+{\cal V}_T.
\end{align}
Given the i.i.d. switching probabilities $\mathbb{P}(\sigma_t = i) = p_i$ and the noise $w_t \sim \mathcal{N}(0, I_r)$, we obtain from \eqref{eq6.3} that
\begin{align}\nonumber
\mathbb{E}[{\cal V}_{T}\mid v_{T-1}]
  =&\frac{1}{2}\Bigg\{\left(\vec{\gamma}_{T-1}^\mathrm{T}v_{T-1}^\mathrm{T}\right)\sum_{i=1}^{\ell}p_i\left((\delta_{\ell}^i)^\mathrm{T}\tilde{\bf A}^\mathrm{T}{\bf Q}\tilde{\bf A}\delta_{\ell}^i\right)\vec{\gamma}_{T-1}v_{T-1}\\\nonumber
&+\left(\vec{\gamma}_{T-1}^\mathrm{T}\vec{\theta}_{T-1}^\mathrm{T}u_{T-1}^\mathrm{T}\right)\sum_{i=1}^{\ell}p_i\left((\delta_{\ell}^i)^\mathrm{T}\tilde{\bf B}^\mathrm{T}{\bf Q}\tilde{\bf B}\delta_{\ell}^i\right)\vec{\gamma}_{T-1}\vec{\theta}_{T-1}u_{T-1}\\\nonumber
  &+\left(\vec{\gamma}_{T-1}^\mathrm{T}v_{T-1}^\mathrm{T}\right)\sum_{i=1}^{\ell}p_i\left((\delta_{\ell}^i)^\mathrm{T}\tilde{\bf A}^\mathrm{T}{\bf Q}\tilde{\bf B}\delta_{\ell}^i\right)\vec{\gamma}_{T-1}\vec{\theta}_{T-1}u_{T-1}\\\nonumber
  &+\left(\vec{\gamma}_{T-1}^\mathrm{T}\vec{\theta}_{T-1}^\mathrm{T}u_{T-1}^\mathrm{T}\right)\sum_{i=1}^{\ell}p_i\left((\delta_{\ell}^i)^\mathrm{T}\tilde{\bf B}^\mathrm{T}{\bf Q}\tilde{\bf A}\delta_{\ell}^i\right)\vec{\gamma}_{T-1}v_{T-1}
  \\\label{eq.app1}
  &+\mathrm{tr}\left[\left(\vec{\gamma}^\mathrm{T}_{T-1}\vec{\theta}^\mathrm{T}_{T-1}\right)\sum_{i=1}^{\ell}p_i\left((\delta_{\ell}^i)^\mathrm{T}\tilde{\bf F}^\mathrm{T}{\bf Q}\tilde{\bf F}\delta_{\ell}^i\right)\vec{\gamma}_{T-1}\vec{\theta}_{T-1}\right]\Bigg\}.
\end{align}

Then given $\vec{\gamma}_{T-1}$, we have
\begin{align*}
    \frac{\partial {\cal V}_{T-1}}{\partial u_{T-1}}&=v^\mathrm{T}_{T-1}\left(\vec{\gamma}_{T-1}^\mathrm{T}\sum_{i=1}^{\ell}p_i\left((\delta_{\ell}^i)^\mathrm{T}\tilde{\bf A}^\mathrm{T}{\bf Q}\tilde{\bf B}\delta_{\ell}^i\right)\vec{\gamma}_{T-1}\right)\vec{\theta}_{T-1}\\
    &+u_{T-1}^\mathrm{T}\left[\vec{\theta}_{T-1}^\mathrm{T}\left({\bf D}+\vec{\gamma}_{T-1}^\mathrm{T}\sum_{i=1}^{\ell}p_i\left((\delta_{\ell}^i)^\mathrm{T}\tilde{\bf B}^\mathrm{T}{\bf Q}\tilde{\bf B}\delta_{\ell}^i\right)\vec{\gamma}_{T-1}\right)\vec{\theta}_{T-1}\right]
    \end{align*}
    and the Hessian $\frac{\partial^2 {\cal V}_{T-1}}{\partial u^2_{T-1}}\in \mathbb{R}^{m\times m}$ is positive-definite:

    \begin{align*}
\vec{\theta}_{T-1}^\mathrm{T}\left({\bf D}+\vec{\gamma}_{T-1}^\mathrm{T}\sum_{i=1}^{\ell}p_i\left((\delta_{\ell}^i)^\mathrm{T}\tilde{\bf A}^\mathrm{T}{\bf Q}\tilde{\bf B}\delta_{\ell}^i\right)\vec{\gamma}_{T-1} \right)\vec{\theta}_{T-1}\succ {\bf 0}.
\end{align*}
It follows that $
\vec{\theta}_{T-1}u^*_{T-1}(\vec{\gamma}_{T-1})={\cal G}_{T-1}({\vec{\gamma}_{T-1}})v_{T-1}$, and
$u^*_{T-1}(\vec{\gamma}_{T-1})=\tilde{\bf G}_{T-1}({\vec{\gamma}_{T-1}})v_{T-1}$, where 
\begin{align*}
    {\cal G}_{T-1}({\vec{\gamma}_{T-1}}):=&-\left[{\bf D}+\vec{\gamma}_{T-1}^\mathrm{T}\sum\limits_{i=1}^{\ell}p_i\left((\delta_{\ell}^i)^\mathrm{T}\tilde{\bf B}^\mathrm{T}{\bf Q}\tilde{\bf B}\delta_{\ell}^i\right)\vec{\gamma}_{T-1}\right]^{-1}\left[\vec{\gamma}_{T-1}^\mathrm{T}\sum_{i=1}^{\ell}p_i\left((\delta_{\ell}^i)^\mathrm{T}\tilde{\bf B}^\mathrm{T}{\bf Q}\tilde{\bf A}\delta_{\ell}^i\right)\vec{\gamma}_{T-1}\right],\\
    \tilde{\bf G}_{T-1}({\vec{\gamma}_{T-1}})=&{\bf 1}_N^\mathrm{T}\ltimes {\cal G}_{T-1}({\vec{\gamma}_{T-1}}).
\end{align*}
Analogous to the proof of Theorem \ref{t5.1}, the row vector ${\bf 1}_N^\mathrm{T}$ extracts the optimal continuous control $u^*_{T-1}$ from the stacked vector $\vec{\theta}_{T-1}u^*_{T-1}$. This is valid because, despite the stochastic switching, the current logical state $\vec{\theta}_{T-1}$ remains a canonical basis vector $\delta_N^i$, and the multiplication by ${\bf 1}_N^\mathrm{T}$ effectively selects the control gain block corresponding to the active mode $i$.

Substituting $u_{T-1}^*({\vec{\gamma}_{T-1}})$ and \eqref{eq.app1} back into \eqref{eq.app0} yields
\begin{align*}
    \mathbb{E}[{\cal V}^*_{T-1}]=\frac{1}{2}v_{T-1}^\mathrm{T}{\cal K}_{T-1}({\vec{\gamma}_{T-1}})v_{T-1}+\frac{1}{2}\mathrm{tr}\left[\left(\vec{\gamma}_{T-1}\vec{\theta}_{T-1}\right)^\mathrm{T}\sum_{i=1}^{\ell}p_i\left((\delta_{\ell}^i)^\mathrm{T}\tilde{\bf F}^\mathrm{T}{\bf Q}\tilde{\bf F}\delta_{\ell}^i\right)\vec{\gamma}_{T-1}\vec{\theta}_{T-1}\right],
\end{align*}
where (for notational simplicity, we omit the dependency on $\vec{\gamma}_t$ in the following)
\begin{align*}
    {\cal K}_{T-1}=&{\bf C}+{\cal G}_{T-1}^\mathrm{T}{\bf D}\mathcal{G}_{T-1}+\vec{\gamma}^\mathrm{T}_{T-1}\sum_{i=1}^{\ell}p_i\left((\delta_{\ell}^i)^\mathrm{T}\tilde{\bf A}^\mathrm{T}{\bf Q}\tilde{\bf A}\delta_{\ell}^i\right)\vec{\gamma}_{T-1}    +\vec{\gamma}^\mathrm{T}_{T-1}\sum_{i=1}^{\ell}p_i\left((\delta_{\ell}^i)^\mathrm{T}\tilde{\bf A}^\mathrm{T}{\bf Q}\tilde{\bf B}\delta_{\ell}^i\right)\vec{\gamma}_{T-1}\mathcal{G}_{T-1}\\
    &+\mathcal{G}_{T-1}^\mathrm{T}\vec{\gamma}_{T-1}^\mathrm{T}\sum_{i=1}^{\ell}p_i\left((\delta_{\ell}^i)^\mathrm{T}\tilde{\bf B}^\mathrm{T}{\bf Q}\tilde{\bf A}\delta_{\ell}^i\right)\vec{\gamma}_{T-1}+\mathcal{G}_{T-1}^\mathrm{T}\vec{\gamma}_{T-1}^\mathrm{T}\sum_{i=1}^{\ell}p_i\left((\delta_{\ell}^i)^\mathrm{T}\tilde{\bf B}^\mathrm{T}{\bf Q}\tilde{\bf B}\delta_{\ell}^i\right)\vec{\gamma}_{T-1}\mathcal{G}_{T-1}.
\end{align*}
To minimize $\mathbb{E}[{\cal V}^*_{T-1}]$ given $v_{T-1}$, let $\vec{\gamma}_{T-1}=\vec{\gamma}_{T-1}^*$ where
\begin{align*}
    \vec{\gamma}_{T-1}^*:=\arg\min\limits_{\vec{\gamma}_{T-1}\in\Delta_{M}} \frac{1}{2}v_{T-1}^\mathrm{T}{\cal K}_{T-1}({\vec{\gamma}_{T-1}})v_{T-1}+\frac{1}{2}\mathrm{tr}\left[\left(\vec{\gamma}_{T-1}\vec{\theta}_{T-1}\right)^\mathrm{T}\sum_{i=1}^{\ell}p_i\left((\delta_{\ell}^i)^\mathrm{T}\tilde{\bf F}^\mathrm{T}{\bf Q}\tilde{\bf F}\delta_{\ell}^i\right)\vec{\gamma}_{T-1}\vec{\theta}_{T-1}\right].
\end{align*}

The optimal policies and value functions for all preceding time steps $t = T-2,\dots, 0$ are obtained analogously by backward induction. This completes the proof.

\noindent\emph{Remark on the role of the trace term.} 
Note that for a fixed logical control $\vec{\gamma}_t$, the trace term is independent of the continuous control $u_t$. 
Consequently, when minimizing $\mathbb{E}[\mathcal{V}_{t+1} \mid v_t, \vec{\gamma}_t, u_t]$ over $u_t$, this term does not appear in the gradient or Hessian with respect to $u_t$. 
Therefore, the optimal continuous control gain $\mathcal{G}_t(\vec{\gamma}_t)$ and the Riccati matrix $\mathcal{K}_t(\vec{\gamma}_t)$ are determined solely by the deterministic part of the dynamics (the $\tilde{\mathbf{A}}$ and $\tilde{\mathbf{B}}$ terms), unaffected by the noise statistics. 

However, the trace term does depend on $\vec{\gamma}_t$ and $\vec{\theta}_t$. 
Thus, when selecting the optimal logical control $\vec{\gamma}_t^*$, this term must be included in the minimization, as it contributes to the expected cost-to-go. 
This separation is reflected in the two components of the optimal logical control criterion \eqref{t6.1.1}: 
the quadratic form accounts for the expected evolution, while the trace term accounts for the noise effect.
\hfill $\Box$

\end{document}